\def\RR{{\mathbb R}}
\def\CC{{\mathbb C}}
\def\NN{{\mathbb N}}
\def\A{{\mathcal A}}
\def\B{{\mathcal B}}
\def\H{{\mathcal H}}
\def\K{{\mathcal K}}
\def\M{{\mathcal M}}
\def\N{{\mathcal N}}
\def\R{{\mathcal R}}
\def\W{{\mathcal W}}
\newtheorem{theorem}{Theorem}[section]
\newtheorem{corollary}[theorem]{Corollary}
\newtheorem{proposition}[theorem]{Proposition}
\newtheorem{conjecture}[theorem]{Conjecture}
\newtheorem{lemma}[theorem]{Lemma}
\begin{document}
\date{}

\title{An algebraic Haag's theorem}

\author{{\bf Mih\'aly Weiner\footnote{On leave from the Alfr\'ed R\'enyi 
Institute of Mathematics}
\footnote{Supported by the ERC 
Advanced Grant 227458 OACFT ``Operator Algebras and Conformal Field
Theory''}}\\ 
Universit\`a di Roma ``Tor Vergata'', 
Dipartimento di Matematica \\
{\tt{mweiner@renyi.hu}}}

\maketitle

\begin{abstract}
Under natural conditions (such as split property and geometric modular 
action of wedge algebras) it is shown that the unitary equivalence class 
of the net of local (von Neumann) algebras in the vacuum sector
associated to double cones with bases on a fixed space-like hyperplane
completely determines an algebraic QFT model. More precisely, if for two 
models there is unitary connecting all of these algebras, then --- {\it 
without} assuming that this unitary also connects their 
respective vacuum states or spacetime symmetry representations --- it 
follows that the two models are equivalent. This result might be viewed 
as an algebraic version of the celebrated  theorem of Rudolf Haag 
about problems regarding the so-called ``interaction-picture'' in QFT. 

Original motivation of the author for finding such an algebraic version 
came from conformal chiral QFT. Both the chiral case as well as a related 
conjecture about standard half-sided modular inclusions will be also 
discussed.
\end{abstract}

\numberwithin{equation}{section}

\section{Introduction}
\subsection{Haag's theorem and its algebraic version}
 
If we ``freeze'' a classical, nonrelativistic physical system --- say a 
mechanical system of $n$ point masses --- at a certain time-instant, we 
do not see if the system was an ``interactive'' or a ``free'' one. 
A certain configuration with given velocities may correspond both to a free 
or to an interactive system. Interaction becomes visible only when one 
looks at how things {\it change}.

This is the basic idea behind the so-called ``interaction-picture'' 
in quantum field theory (QFT). Within the framework of {\it 
Wightmann-axioms} \cite{andallthat}, free models can be well-described in 
terms of Wightmann-fields (i.e.\! operator-valued distributions on 
spacetime). Then to give an interactive model one should consider the 
restriction of the same free fields at a certain spacelike hyperplane but
then extend it to spacetime with a different {\it time-evolution}. (So 
that the interactive and free fields will coincide at our fixed spacelike 
hyperplane but possible nowhere else.)

However, Haag's theorem (see the book \cite{andallthat} for 
a detailed account) has ruled out the existence of such a description.
Suppose two QFT models are given: one in terms of the Wightmann-fields  
$\Phi_r\; (r=1,\ldots, n)$ and another one in terms of the 
Wightmann-fields $\tilde{\Phi}_r\; (r=1,\ldots, n)$. Assuming some 
relatively mild conditions (such as the existence of well-behavied
restrictions for the fields along spacelike hyperplanes), if there 
exists a spacelike hyperplane $H$ and a unitary operator $V$ such that 
\begin{equation}
\label{eq:V}
V\Phi_r(x)V^* = \tilde{\Phi}_r(x)\;\;\;\;\;\; (x\in H, r=1,\ldots n),
\end{equation}
then it also follows that up to a possible phase-factor, 
$V\Omega=\tilde{\Omega}$ (where $\Omega$ and $\tilde{\Omega}$ are the 
respective vacuum-vectors), $V\Phi_r(x)V^* = \tilde{\Phi}_r(x)$ for all 
spacetime points $x$ and $r=1,\ldots, n$, and finally, that 
$VU(g)V^*=\tilde{U}(g)$ (where $U$ and $\tilde{U}$ are the respective 
representations of spacetime symmetries) for all elements $g$ of the 
connected Poincar\'e group. Thus $V$ establishes an equivalence 
between the two models: if one was free, so is the other --- 
we cannot make an interacting model out of a free one in this way. 

So what would be an algebraic version of Haag's theorem?
Fix a spacelike hyperplane $H$. We shall say that two nets 
of von Neumann algebras $\A$ and $\tilde{\A}$ are {\bf equivalent 
along} $H$, iff there is a unitary operator $V$ such that
\begin{equation}
\label{eq:V_alg}
V\A(K^\diamond)V^* = \tilde{\A}(K^\diamond)
\end{equation}
for every double-cone $K^\diamond$ with base $K\subset H$. (See section 
\ref{sec:prel} on the notions of double-cones and algebraic QFT.)
In this paper --- under certain additional assumptions of ---
it will be proved that if $(\A,U)$ and $(\tilde{\A},\tilde{U})$ are two 
algebraic QFT models and $\A$ and $\tilde{A}$ are equivalent along a 
spacelike hyperplane, then there exists a unitary $W$ such that 
$W\A({\mathcal O})W^* = \tilde{\A}({\mathcal O})$ for all double-cones ${\mathcal O}$ and 
$WU(g)W^* = \tilde{U}(g)$ for all elements $g$ of the connected 
Poincar\'e group: thus the two models are equivalent. 

Actually, it will suffice to assume equivalence along a 
``half-hyperplane'' $H^+$; see 
the details in section \ref{sec:main}. Nevertheless, $H^+$ has still
an infinite space-volume. In fact, it is well-known that two inequivalent 
models, when restricted to a compact region, may give rise to 
unitarily equivalent nets of algebras; see \cite{glimmjaffe,EF} for 
examples.

\subsection{Algebraic vs.\! original version}

Of course in a strict sense the two versions 
of Haag's theorem cannot be compared. They are statements made in two 
different frameworks and despite numerous 
attempts, the passage between the two frameworks --- albeit clear in 
actual examples --- in general is still unresolved. 

Nonetheless, in some sense, as we shall see now, one may say that the 
algebraic version is stronger than the original one, and that the new 
version is not a simple reformulation of the old one. Let us see why. 

In case we deal with algebraic QFT 
models associated to Wightmann field theories, our additional assumptions 
--- with the exception of split property, which however probably could be 
avoided (see the comments at the end of section \ref{sec:prel}) --- are 
known to hold. To appreciate the differences, rather than at 
assumptions regarding frameworks, one must look at the 
respective notions of equivalence and the ways in which it is 
established.

The natural notion of equivalence of Wightmann field theories (i.e. the 
existence of a unitary operator connecting the defining fields and 
representations) --- and hence also the condition of equivalence along a 
spacelike hyperplane appearing in Haag's original version --- is too 
narrow, and does not coincide with physical equivalence. (In a sense 
this was exactly the original motivation \cite{haagbook} for considering 
the local algebras generated by the fields rather than the fields 
themselves: they already contain all physical information --- 
fields also depend on the choices made regarding our {\it description}.)

But there is more to this.
In the original version, the unitary operator $V$ appearing in 
equation (\ref{eq:V}) actually also turns out to be the unitary operator 
establishing the equivalence between the two models. This clearly does 
{\it not} hold in the algebraic case. 

For example, let both models be the same scalar free field model. Since 
the adjoint action of a Weyl-operator $W(f)$ preserves every local 
algebra, $V:=W(f)$ satisfies the requirement (\ref{eq:V_alg}) made in the 
algebraic version. However, in general $W(f)\Omega \neq \lambda \Omega$ 
so $V=W(f)$ does not establish an equivalence between the 
model and itself. To put it another way: a unitary operator whose adjoint 
action leaves the fields along a hyperplane invariant must be a multiple 
of the identity and hence must preserve every local algebra. To the 
contrary, a unitary operator, whose adjoint action preserves every local 
algebra, does not necessarily preserve the vacuum and hence may not take a 
Poincar\'e-covariant field into a Poincar\'e-covariant field.

So even if we the passage between Wightmann field theory and 
algebraic QFT was clear, the introduced algebraic version of Haag's 
theorem would not become a simple consequence of the original one. 
Rather, it is the other way around.

\subsection{Conformal QFT and half-sided modular 
inclusions}

Though it is always nice to strengthen a theorem, this was 
not why the author considered an algebraic version of Haag's theorem. 
As it will be explained now, original motivation came from conformal 
chiral QFT and in particular its relation to half-sided modular 
inclusions.

M\"obius covariant nets on $S^1$ have remarkable properties. Many things 
that in ``ordinary'' algebraic QFT often appear as additional assumptions 
--- like for example {\it additivity, Bisognano-Wichmann property} and 
{\it factoriality} of local algebras --- can be in fact {\it derived}; see 
\cite{FrG,BGL,GL96,FrJ,GLW} on the general structure of such nets.

For simplicity of notations, let us consider such a net $\A$ with vacuum 
vector $\Omega$ on the real line $\RR$ (see the last section on details of 
what it exactly means). Setting $\M:=\A(0,\infty)$ and 
$\N:=\A(1,\infty)$,
by an application of the {\it Bisognano-Wichmann property} (which, as was 
mentioned, in the conformal case is automatic) 
we have that the $(\Omega, \N\subset \M)$ is a {\bf standard half-sided 
modular inclusion} of von Neumann factors. That is, 
\begin{itemize}
\item
$\Omega$ is a {\it standard vector} of the inclusion $\N\subset \M$: it 
is cyclic and separating for both $\N,\M$ and $\N'\cap \M$, 
\item
$\Delta_{\M,\Omega}^{it}\N\Delta_{\M,\Omega}^{it}\subset \N$
for all $t\leq 0$.
\end{itemize}
This also works the other way around. 
Namely, it is shown \cite{Wie93,AZs,Wie94,GLW} that if $(\Omega, \N\subset 
\M)$ is a standard half-sided modular inclusion of factors, then one can 
construct a unique {\it strongly additive} M\"obius covariant net $\A$ 
with vacuum vector $\Omega$ such that $\A(0,\infty)=\M$ and 
$\A(1,\infty)=\N$. 

At first sight, this seems to give a great opportunity for constructing 
new conformal chiral QFT models. Indeed, instead of an entire net of 
algebras (together with a representation of the M\"obius group), all we 
need is to present a certain standard inclusion of von Neumann factors.

Sadly, the reality is just the opposite way around. As far as the author 
knows, (nontrivial) standard half-sided modular inclusions have been 
constructed only with the help of M\"obius covariant nets.
However, there were hopes to find a more or less direct way to 
construct a new half-sided modular inclusions out of an existing one. 
R.\! Longo proposed\footnote{This idea has never been published; 
the author learned about it through oral communication.} to consider the 
following ``perturbation'' of a half-sided modular inclusion of factors 
$(\Omega,\N\subset\M)$. 

For a vector $\Psi$ which is cyclic and separating for $\M$, let us denote by 
$J_\Psi$ and $\Delta_\Psi$ the modular objects associated to $(\Psi,\M)$. 
By \cite{araki}, for each $X\in\M$, $X^*=X$ there exists a  
vector $\Omega_X$ cyclic and separating for $\M$ such that
\begin{itemize}
\item $\Omega_X$ is in the natural cone of $(\Omega,\M)$
and hence $J_{\Omega_X}=J_\Omega=:J$,
\item ${\rm ln}(\Delta_{\Omega_X})= {\rm ln}(\Delta_\Omega)+ X + JXJ$.
\end{itemize}
In particular, if $X\in\N$ then by applying the {\it 
Trotter product formula} one can easily check that 
$(\Omega_X,\N\subset \M)$ is still a half-sided modular inclusion. If 
$\Omega_X$ is also a standard vector for $\N\subset \M$, then we can go on 
and generate a new strongly additive net $\A_X$. 

But are the original net $\A_0$ (from where we took our half-sided 
modular inclusion) and $\A_X$ really different?
Using the mentioned product formula
one can also easily show that with $X\in\N$ many local algebras will 
remain the same; not only that $\A_0(0,\infty)=\A_X(0,\infty)= \M$ and 
$\A_0(1,\infty)=\A_X(1,\infty)=\N$ but actually 
\begin{equation}
\label{eq:localeq}
\A_0(I)= \A_X(I) \;\;\;\;\textrm{for all }\, I\subset (0,1). 
\end{equation}
On the other hand, by an easy reformulation (see section 
\ref{sec:conformal}) of the main result of the present paper, if $\A_0$ 
and hence also $\A_X$ satisfy the {\it split property}, then the above 
equality implies that $\A_0$ and $\A_X$, as 
M\"obius covariant nets, are equivalent. Thus, in this way we cannot 
obtain new models.

Of course one may try to improve the situation. Instead of a self-adjoint 
$X\in\N$, more generally we could take any $X\in\M$, $X^*=X$ for which 
$e^{iXt}\N e^{-iXt}\subset \N$ for all $t\leq 0$. For example, $X$ may be 
a self-adjoint of the form $X=X_1 + X_2$ with $X_1\in \N$ and $X_2\in\M\cap 
\N'$, and in concrete examples we may find further choices. 

Neveretheless, 
in light of Haag's theorem, it seems unlikely for the author that 
retaining the same inclusion $\N\subset \M$ and changing only the 
``dynamics'' one could obtain something really new. 
Actually in section \ref{sec:conformal}, regarding this question
we shall observe two important facts.
Let $(\Omega,\N\subset \M)$ and $(\tilde{\Omega},\tilde{\N}\subset 
\tilde{\M})$ be two standard half-sided modular inclusions of factors and 
denote the two corresponding strongly additive M\"obius covariant nets 
by $(\A,U)$ and $(\tilde{\A},\tilde{U})$, respectively.
\begin{itemize}
\item[I.] If there exists a unitary operator $V$ such that 
$V\N V^*= \tilde{\N}$ and $V\M V^*=\tilde{\M}$ then for each 
$n\in\NN$ there exists a unitary operator $V_n$ such that
$$
V_n\A(j,k)V_n^* = \tilde{\A}(j,k)
$$
for every pair of integers $j,k\in\{0,1,\ldots ,n\}$.
\end{itemize}
In particular, this implies that if $\A$ is split, so is $\tilde{\A}$
and in fact $\A$ and $\tilde{\A}$ has unitarily equivalent 
2-{\it interval inclusions}. Now this inclusion is a rich source of 
information; in the {\it completely rational} case essentially it contains 
\cite{KLM} the entire representation theory of the net. So this already 
suggests that
perhaps $\A$ and $\tilde{\A}$ are equivalent. As a matter of fact, 
the conformal version of our algebraic Haag's 
theorem tells that just a slightly stronger condition indeed
implies equivalence.
\begin{itemize} 
\item[I\!I.] Let $(\A,U)$ and $(\tilde{\A},\tilde{U})$ be two M\"obius 
covariant nets and assume that at least one of them is split. If there 
exists a unitary operator $V$ such that
$$
V\A(j,k)V^* = \tilde{\A}(j,k)
$$
for every pair of natural numbers $j,k \in\NN$ then $(\A,U)$ and 
$(\tilde{A},\tilde{U})$ are equivalent.
\end{itemize}
Again, as was mentioned already and will be explained at the end of the 
next section, the author thinks that split condition should be possible to 
remove. Now I + I\!I + the remarks made after stating them  --- though 
does not actually prove --- seems to indicate the following.
\begin{conjecture}
The unitary equivalence class 
of a standard half-sided modular inclusion of 
factors $(\Omega,\N\subset\M)$ is completely determined
(up to a possible normalization of $\Omega$)
by the unitary equivalence class of the inclusion $\N\subset \M$. 
That is, for another
half-sided modular inclusion 
$(\tilde{\Omega},\tilde{\N}\subset \tilde{\M})$ 
with equal normalization $\|\tilde{\Omega}\| = \|\Omega\|$, if 
there 
exists a unitary operator $V$ such that $V\N V^*=\tilde{\N}$ and 
$V\M V^* = \tilde{\M}$, then there exists a unitary operator $W$ such that
not only $W\N W^* = \tilde{\N}$ and $W\M W^* = \tilde{\M}$, but also 
$W\Omega = \tilde{\Omega}$.
\end{conjecture}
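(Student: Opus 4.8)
The plan is to combine facts~I and~II above with the Wiesbrock correspondence between standard half-sided modular inclusions and strongly additive M\"obius covariant nets, closing the two gaps that currently leave the statement a conjecture. First I would pass to net language: let $(\A,U)$ and $(\tilde\A,\tilde U)$ be the strongly additive nets associated to $(\Omega,\N\subset\M)$ and $(\tilde\Omega,\tilde\N\subset\tilde\M)$, so that $\A(0,\infty)=\M$, $\A(1,\infty)=\N$ and similarly for the tilde. The hypothesis is exactly that the given $V$ satisfies $V\M V^*=\tilde\M$ and $V\N V^*=\tilde\N$, which is the input of fact~I; applying it produces, for every $n\in\NN$, a unitary $V_n$ with $V_n\A(j,k)V_n^*=\tilde\A(j,k)$ for all $j,k\in\{0,1,\dots,n\}$.

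The core step is to promote the family $\{V_n\}$ to a single unitary $V_\infty$ matching \emph{all} the algebras $\A(j,k)$, $j,k\in\NN$, simultaneously. Here I would exploit that $V_{n+1}^*V_n$ normalizes each $\A(j,k)$ with $j,k\le n$, and try to correct $V_{n+1}$ inside this normalizer so that the modified sequence agrees with $V_n$ on all levels up to $n$; a strongly convergent coherent sequence would then have a unitary limit $V_\infty$ with $V_\infty\A(j,k)V_\infty^*=\tilde\A(j,k)$ for every $j,k\in\NN$. I expect this to be the main obstacle. Weak limits of unitaries need not be unitary, and arranging the normalizer corrections uniformly enough to force strong convergence to a unitary is precisely the delicate point; this is what prevents the present argument from being a proof.

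Granting $V_\infty$, and assuming for the moment that one of the nets is split, fact~II yields a unitary $W$ with $W\A({\mathcal O})W^*=\tilde\A({\mathcal O})$ for all intervals ${\mathcal O}$ and $WU(g)W^*=\tilde U(g)$ for all $g$ in the M\"obius group. Since $W$ intertwines the two representations, it carries the unique (up to a scalar) $U$-invariant vector $\Omega$ to a $\tilde U$-invariant vector, whence $W\Omega=\lambda\tilde\Omega$ with $|\lambda|=\|\Omega\|/\|\tilde\Omega\|=1$ by the normalization assumption $\|\tilde\Omega\|=\|\Omega\|$; replacing $W$ by $\bar\lambda W$ gives $W\Omega=\tilde\Omega$. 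As $\M=\A(0,\infty)$ and $\N=\A(1,\infty)$ are among the matched algebras, we also obtain $W\M W^*=\tilde\M$ and $W\N W^*=\tilde\N$, which is the claim.

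Two caveats remain. Besides the limiting step, fact~II assumes split whereas the conjecture does not, so one must additionally remove split from fact~II, which the author expects to be possible. It is tempting to bypass the sequential limit altogether by conjugating $\A$ by $V$: the net $V\A V^*$ shares with $\tilde\A$ the algebras $\tilde\M$, $\tilde\N$ and their relative commutant $\tilde\M\cap\tilde\N'$, but carries the vacuum $V\Omega$ instead of $\tilde\Omega$. This is exactly a Longo-type perturbation as in equation~(\ref{eq:localeq}), with the two modular flows of $\tilde\M$ differing by a Connes cocycle; producing a single unitary that fixes $\tilde\M$ and $\tilde\N$ and sends $V\Omega$ to $\tilde\Omega$ is the real heart of the matter.
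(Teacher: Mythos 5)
This statement is labelled a \emph{Conjecture} in the paper, and the author explicitly says that facts I and I\!I ``though does not actually prove --- seems to indicate'' it; there is no proof in the paper to compare against. Your proposal reproduces exactly the author's own heuristic chain (fact I to get the level-$n$ unitaries $V_n$, a hoped-for passage to a single unitary matching all $\A(j,k)$, then fact I\!I to upgrade to full equivalence including the vacuum), and to your credit you state plainly that the two remaining steps are open. So the honest verdict is: this is not a proof, and it was not meant to be one by the paper either.

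To be concrete about why the central step fails as written: fact I produces, for each $n$, a unitary $V_n$ intertwining the algebras $\A(j,k)$ and $\tilde\A(j,k)$ only for $j,k\le n$, and these $V_n$ are built inductively by multiplying by unitaries from $\tilde\N_k$ chosen to place $V_k\Omega$ in a natural cone --- there is no mechanism forcing the sequence to stabilize or converge. Your suggestion of correcting $V_{n+1}$ by an element of the normalizer of $\{\A(j,k):j,k\le n\}$ so that it agrees with $V_n$ ``on all levels up to $n$'' is not available: what one needs is agreement of the \emph{adjoint actions} on those algebras, and even granting that, the residual freedom at each stage lives in a noncompact group (the relevant algebras are type I\!I\!I, not type I), so the Doplicher--Longo compactness that rescues the main theorem of the paper has no analogue here --- indeed the compactness argument in the paper crucially uses a control on the images $W_n\Omega$ of the vacuum, which is precisely the information the conjecture forbids you to assume. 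The second gap (removing split from fact I\!I) is likewise open and acknowledged by the author. Your closing observation that the problem reduces to finding a single unitary fixing $\tilde\M$, $\tilde\N$ and sending $V\Omega$ to $\tilde\Omega$, i.e.\ to understanding a Connes-cocycle--type perturbation of the modular data, is a fair restatement of the difficulty, but it is a reformulation of the conjecture, not progress on it.
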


\section{Preliminaries: axioms of algebraic QFT}
\label{sec:prel}

In this paper we shall consider an algebraic version of Haag's theorem. An 
algebraic QFT, rather than quantum fields, is given in terms of a net of 
local algebras ${\mathcal O}\mapsto \A({\mathcal O})$. We shall work directly on the so-called 
``vacuum Hilbert space'' and consider ${\mathcal O}\mapsto \A({\mathcal O})''=\A({\mathcal O})$ to be a 
net of {\it von Neumann algebras}. 

For a spacelike hyperplane $H$, and a bounded, connected and simply 
connected open subset $K$ of $H$ we set
\begin{equation}
K^\diamond:= {\rm Int}(\overline{K}^c),
\end{equation}
where $K^c$ is the (closed) {\it causal completion} of $K$ and ``${\rm 
Int}$'' stands for the (open) interiour. We say that $K^\diamond$ is a 
{\bf double-cone} with {\bf base} on $H$; that is, with $K\subset H$.

For physical purposes (e.g.\! for determining the $S$-matrix or the 
structure of charged sectors) it is enough to work with special 
spacetime regions like double-cones. So considering only what 
is absolutely necessary, here we define an {\bf algebraic QFT} to 
be a map associating to each double-cone ${\mathcal O}$ a von Neumann algebra $\A({\mathcal O})$ 
on a fixed Hilbert space $\H$ together with a strongly continuous unitary 
representation $U$ of the connected Poincar\'e group satisfying the 
following ``minimal'' conditions. (Note that some further additional 
properties will be later 
considered.)
\begin{itemize}
\item[(1)] {\bf Isotony}: $\A({\mathcal O}_1)\subset \A({\mathcal O}_2)$ 
whenever
${\mathcal O}_1\subset{\mathcal O}_2$.
\item[(2)] {\bf Locality}: $[\A({\mathcal O}_1),\A({\mathcal O}_2)]=0$ whenever ${\mathcal O}_1$
and ${\mathcal O}_2$ are spacelike separated.
\item[(3)] {\bf Covariance}: $U(g)\A({\mathcal O})U(g)^*=\A(g({\mathcal O}))$
for all regions ${\mathcal O}$ and elements $g$ of the connected Poincar\'e group.
\item[(4)] {\bf Positivity of energy}: $P_{\mathbf x}\geq 0$ whenever 
$\mathbf x$ is future like. $P_{\mathbf x}$ is defined by the equation 
$U(\tau_{t \mathbf x}) = e^{itP_{\mathbf x}}\; (t\in\RR)$ in which 
$\tau_{\mathbf z}$ is a translation by $\mathbf z$.
\item[(5)] {\bf Existence, uniqueness and cyclicity of vacuum}: 
up to phase there exists a unique unit vector $\Omega$ invariant for
$U(\tau)$ for all spacetime translations $\tau$. Moreover, 
$\Omega$ is cyclic for $\vee_{{\mathcal O}}\A({\mathcal O})$.
\end{itemize}
Note that from a physical point of view one should assume $U$ to be  
a {\it projective} representation rather than a true one. However,
it is easy to see that if $U$ is a projective representation of a group $G$ 
and $N\subset G$ is a normal subgroup such that there exists a unique 
one-dimensional invariant subspace for $U(N)$, then actually 
this subspace is invariant for the action of the full group and 
hence one can arrange the ``phase factors'' in 
such a way that $U$ becomes a true representation. So without loss of 
generality, for clarity we have stated the axioms with $U$ being a true 
representation rather than just a projective one.

Although so far we have only associated algebras to double-cone like 
bounded regions, by setting
\begin{equation}
\A({\mathcal O}):= \vee_{K^\diamond\subset {\mathcal O}}\A({\mathcal O})
\end{equation}
we may talk about the algebra $\A({\mathcal O})$ associated to any open region ${\mathcal O}$.
Note that {\it isotony} implies that the new definition does not change 
the algebra associated to a double-cone and that properties (1,2,3,4,5) 
remain valid.

The standard {\it Reeh-Schlieder} argument combined with {\it 
locality} shows that $\Omega$ is cyclic and separating for $\A(\W)$ 
whenever $\W$ is a {\it wedge region}. (See e.g.\! the book 
\cite{haagbook} for precise definition of wedge regions.) Actually, 
by \cite[Thm.\! 3]{longo} it even follows that for a wedge region 
$\W$ the algebra $\A(\W)$ is a type I\!I\!I$_1$ factor.
Another well-known consequence of (1,2,3,4,5) is 
{\it irreducibility}:
\begin{equation}
\A(M)'\equiv \{\vee_{{\mathcal O}}\A({\mathcal O})\}' = \cap_{{\mathcal O}}\A({\mathcal O})' = \CC\mathbbm 1.
\end{equation}
Here $M$ stands for the full spacetime.

Howeve, as was mentioned, (1,2,3,4,5) is only a ``minimal set'' of 
conditions; they still allow many pathological examples. In particular, 
while $\Omega$ turns out to be cyclic for $\A(\W)$ whenever $\W$ is a 
wedge, it may not be so for a double-cone\footnote{
For an example, let us fix a bounded open set of spacetime and call a 
region to be ``small'' if 
it can be moved into this set by a Poincar\'e tranformation. Now take a 
``nice'' model and reset all local algebras that are associated to ``small 
region'' to be equal to the trivial algebra $\CC\mathbbm 1$. It is easy 
to see that all listed properties remain valid, but now, starting from a 
``nice'' model we have produced one with the mentioned pathological 
property.}.

Sometimes instead of {\it isotony} the stronger {\bf additivity} property 
is required; namely, that $\A({\mathcal O})\subset \vee_{k=1}^n \A({\mathcal O}_k)$ whenever 
${\mathcal O}\subset \cup_{k=1}^n{\mathcal O}_k$. (Note that in the conformal case 
additivity is not needed as a further assumption since it can be  
actually {\it proved}, as it will be discussed in section 
\ref{sec:conformal}.)
Having additivity one can use the argument of 
Reeh and Schlieder and show that $\Omega$ is cyclic for every local 
algebra $\A({\mathcal O})$ associated to a nonempty open region ${\mathcal O}$.

Local von Neumann algebras were originally introduced to replace the 
unbounded polinomial algebra of local fields. From a physicist point of 
view it seems reasonable to assume that our local von Neumann algebras are 
in fact generated by unbounded (Wightmann) fields (i.e.\! that there is an 
``underlying'' Wightmann field theory and $\A({\mathcal O})$ is the smallest von 
Neumann algebra to which the closure of all fields smeared with 
testfunctions with support in ${\mathcal O}$ are affiliated). Now for the algebra of 
fields additivity is evident. However, the passage from unbounded 
operators to von Neumann algebras 
is a delicate issue. In particular --- at least, up to the knowledge of 
the author --- even assuming an underlying Wightmann field 
theory, so far additivity could not be proved. On the other hand, it is 
easy to see that the cyclicity guaranteed (at the level of Wightmann 
fields) by the Reeh-Schlieder theorem passes without problems to the 
level of local von Neumann algebras. For this reason here we shall 
assume directly this cyclicity rather than making the stronger assumption 
of additivity.
\begin{itemize}
\item[(6)] {\bf Reeh-Schlieder property}: 
$\Omega$ is cyclic for $\A({\mathcal O})$ for every nonempty open region ${\mathcal O}$.
\end{itemize}

Let $\W$ be a wedge region and consider the modular operator 
$\Delta_{\A(\W),\Omega}$ and modular conjugation $J_{\A(\W),\Omega}$
associated to $(\A(\W),\Omega)$. (As was mentioned, $\Omega$ is cyclic and 
separating for $\A(\W)$, so these objects are well-defined). Assuming the 
existence of an ``underlying'' Wightmann field theory, it is known 
\cite{biwi} that these objects have a ``geometrical meaning''. Though 
attempts were made, so far it has not been 
proved that in general, a geometrical nature of these modular objects is a 
consequence of (1,2,3,4,5). So we shall simply assume it.
\begin{itemize}
\item[(7)] {\bf Bisognano-Wichmann property}. If $\W$ is a 
{\it wedge-region}, then $\Delta_{\A(\W),\Omega}^{it}= U(\beta_t)$ 
$(t\in\RR)$
where $t\mapsto \beta$ is the one-parameter group of {\it boosts}
associated to $\W$ with a certain parametrization.
\end{itemize}
For definition of the one-parameter group of boosts associated
to a wedge and details on the parametrization we refer to the book 
\cite{haagbook}. Note that as it will be explained in section 
\ref{sec:conformal}, in the conformal case not only (6), but 
also this property can be derived eliminating the need to 
additionaly assume it.  

The discussed properties (1,2,3,4,5,6,7) are essential for 
the proof of our argument. However --- though for a somewhat technical 
reason --- we shall actually need one more thing.
\begin{itemize}
\item[(8)] {\bf Split property}. The
 inclusion $\A(K^\diamond_1)\subset \A(K^\diamond_2)$ is split whenever 
$\overline{K_1^\diamond}\subset K^\diamond_2$.
\end{itemize}
(Actually {\it distant split property} would suffice for us, but for 
simplicity here we only talk about split property.) For physical 
significance of the split property we again refer to the book 
\cite{haagbook}. Here we briefly comment only on the {\it difference} 
between how (8) and the other properties will be used. 

In the course of the proof of our main theorem, we shall construct a 
sequence of unitary operators $n\mapsto W_n$. The equivalence between the 
two models is then to be established by the strong limit of this sequence. 
But though the author is convienced that this limit exists, he could 
not show this. So instead, split property is used to obtain a compactness 
condition by which at least the existence of a convergent subsequence can 
be established.

Now the way in which split property can be turned into the right 
compactness condition is not simple; in fact the whole next section will 
be dedicated to this question. Nevertheless, the author feels that split 
property should not play an essential role in the algebraic Haag's 
theorem.

\section{On split inclusions}

A $\N\subset \M$ be an inclusion of von Neumann algebras for which 
there exists a type I factor $\R$ ``in between'': $\N\subset \R\subset 
\M$, is said to be a {\bf split} inclusion. Let $\N\subset \M$ be a split 
inclusion and $\Omega$ a {\bf standard vector} for the inclusion in 
question; i.e.\! we suppose
that $\Omega$ is cyclic and separating for both $\N,\M$ and the relative 
commutant $\N'\cap \M$. Denoting the modular conjugation 
associated to $\N'\cap \M$ and the vector $\Omega$ by $J_\Omega$, 
if $\N$ is a factor, we shall set
\begin{equation} 
\R_\Omega: = \N \vee J_\Omega \N J_\Omega. 
\end{equation}
Alternatively, if $\M$ is a factor, we shall set
\begin{equation} 
\R_\Omega: = \M \cap J_\Omega \M J_\Omega.
\end{equation}
By \cite{DL}, under the assumptions made our notation is unambiguous: if 
both $\N$ and $\M$ are factors, then 
$\N \vee J_\Omega \N J_\Omega =  \M \cap J_\Omega \M J_\Omega$. 
Moreover, the thus defined von Neumann algebra $\R_\Omega$ is a type I 
factor between $\N$ and $\M$; we shall say that 
that $\R_\Omega$ is the {\bf canonical type I factor} of the inclusion.

If $(\Omega, \N\subset \M)$ is a standard split inclusion in which one of 
the algebras is a factor, and $W$ is a unitary operator such that it 
preserves the vector $\Omega$ and its adjoint action preserves the 
algebras $\N$ and $\M$, then the adjoint action of $W$ must also preserve 
the canonical type I factor $\R_\Omega$ of the inclusion. Using this fact 
in \cite{DL} it was proved that the group of such unitary operators is 
compact and metrizable (with respect to the strong operator topology). In 
particular, if $n\mapsto W_n$ is a sequence of such unitary operators, 
then one can always find a subsequence $s$ such that $n\mapsto W_{s(n)}$ 
will strongly converge to a unitary operator.

In this section we assume that $(\Omega,\N\subset\M)$ is a standard 
split inclusion in which at least one of the algebras is a factor, and 
$n\mapsto W_n$ is a sequence of unitaries such 
that the adjoint action of $W_n$ preserves the algebras $\N$ and $\M$ for 
all $n\in \NN$. We set $\Omega_n:=W_n\Omega$ and assume that $n\mapsto 
\Omega_n$ is convergent; more precisely, that there is a standard vector 
$\Psi$ for our inclusion such that
$\|\Omega_n - \Psi\|\to 0$ as $n\to \infty$.
Our aim is to find a suitable modification of
the proof of \cite{DL} in order to show the existence of a subsequence 
of $n\mapsto W_n$ converging strongly to a unitary operator.  

We shall proceed in several intermediate steps. We shall begin with an 
important observation which generalizes \cite[Lemma 3.2]{DL}.
\begin{lemma}
Let $\H$ be a Hilbert space, $n\mapsto U_n$ a sequence of 
unitary operators on $\H$ and $\varphi$ a faithful normal state on $\B(\H)$. 
If $n\mapsto \varphi_n:= \varphi\circ{\rm Ad}(U_n)$ converges 
in norm, then there exists a subsequence $s$ such that 
$n\mapsto U_{s(n)}$ converges strongly. Moreover, if the norm limit 
of $n\mapsto \varphi_n$ is faithful, then the strong limit of $n\mapsto 
U_{s(n)}$ is unitary.
\end{lemma}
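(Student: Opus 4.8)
The plan is to transfer everything to the predual and argue with density matrices. Since $\varphi$ is a faithful normal state on $\B(\H)$ I would write $\varphi(\,\cdot\,)=\mathrm{Tr}(\rho\,\cdot\,)$ with $\rho\ge 0$ trace class, $\mathrm{Tr}\,\rho=1$, faithfulness meaning that $\rho$ has trivial kernel; its spectral resolution $\rho=\sum_k\lambda_k\,|e_k\rangle\langle e_k|$ then has all $\lambda_k>0$ and a complete orthonormal eigenbasis $(e_k)_k$, which is necessarily countable — so $\H$ is separable, a point I will use. The state $\varphi_n=\varphi\circ\mathrm{Ad}(U_n)$ has a density matrix $\rho_n$ that is a unitary conjugate of $\rho$; fixing the convention for $\mathrm{Ad}$ so that $\rho_n=U_n\rho U_n^{*}$ (with the opposite convention the same computation controls $U_n^{*}$ instead), this gives $\rho_n=\sum_k\lambda_k\,|U_ne_k\rangle\langle U_ne_k|$, i.e. $\rho_n$ has the \emph{same} eigenvalues $\lambda_k$ with eigenvectors $U_ne_k$. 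Under the identification of the predual of $\B(\H)$ with the trace-class operators, norm convergence $\varphi_n\to\psi$ is exactly trace-norm convergence $\rho_n\to\sigma$, with $\sigma$ the density matrix of $\psi$; crucially $\mathrm{Tr}\,\sigma=\psi(\mathbbm 1)=\lim_n\varphi_n(\mathbbm 1)=1$, so the full mass is retained in the limit.

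Next I would extract the subsequence. For each fixed $k$ the unit vectors $(U_ne_k)_n$ have a weakly convergent subsequence (the unit ball of the separable space $\H$ is weakly sequentially compact), and a diagonal argument over $k$ produces a single subsequence, which I relabel by $n$, along which $U_ne_k\to\xi_k$ weakly for every $k$, with $\|\xi_k\|\le 1$. Feeding this into the limit of matrix elements, trace-norm convergence gives $\langle a,\rho_nb\rangle\to\langle a,\sigma b\rangle$ for all $a,b\in\H$; the left-hand side equals $\sum_k\lambda_k\langle a,U_ne_k\rangle\langle U_ne_k,b\rangle$, and by dominated convergence with the summable majorant $\lambda_k\|a\|\,\|b\|$ it tends to $\sum_k\lambda_k\langle a,\xi_k\rangle\langle\xi_k,b\rangle$. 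Hence $\sigma=\sum_k\lambda_k\,|\xi_k\rangle\langle\xi_k|$.

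The decisive step, which I expect to be the real content of the lemma, is to show that no norm is lost in these weak limits. Taking the trace in the formula for $\sigma$ yields $1=\mathrm{Tr}\,\sigma=\sum_k\lambda_k\|\xi_k\|^2$, whereas $\sum_k\lambda_k=\mathrm{Tr}\,\rho=1$; since every $\lambda_k>0$ and $\|\xi_k\|\le 1$, the equality $\sum_k\lambda_k\big(1-\|\xi_k\|^2\big)=0$ forces $\|\xi_k\|=1$ for all $k$. This is precisely where the hypothesis of \emph{norm} (not merely weak) convergence enters: it is what guarantees $\mathrm{Tr}\,\sigma=1$, i.e. that no spectral weight escapes to infinity. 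Each weak convergence $U_ne_k\to\xi_k$ is therefore a norm convergence, because $\|U_ne_k-\xi_k\|^2=2-2\,\mathrm{Re}\langle U_ne_k,\xi_k\rangle\to 0$; and passing to the limit in $\langle U_ne_j,U_ne_k\rangle=\delta_{jk}$ shows that $(\xi_k)_k$ is orthonormal.

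Finally I would assemble strong convergence and settle unitarity. Define $S$ by $Se_k=\xi_k$; since both $(U_ne_k)_k$ and $(\xi_k)_k$ are orthonormal, for $\eta=\sum_kc_ke_k$ one has the uniform tail bound $\|\sum_{k>K}c_kU_ne_k\|^2=\sum_{k>K}|c_k|^2$, so a three-$\varepsilon$ estimate upgrades the coordinatewise norm convergence to $U_n\eta\to S\eta$ for every $\eta$; thus $U_{s(n)}\to S$ strongly and $S$ is an isometry, which is the first assertion. For the second, note that $\sigma=\sum_k\lambda_k|\xi_k\rangle\langle\xi_k|$ with all $\lambda_k>0$ has kernel $(\mathrm{Ran}\,S)^{\perp}$, so $\psi$ is faithful exactly when $(\xi_k)_k$ is total, i.e. when the isometry $S$ is surjective; in that case $S$ is unitary, as claimed. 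The only genuine subtlety throughout is the one already flagged — securing $\mathrm{Tr}\,\sigma=1$ — which is exactly the role played by the compactness (trace-class nature) of $\rho$ together with norm, rather than weak, convergence of the states.
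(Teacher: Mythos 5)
Your proof is correct, but it follows a genuinely different route from the paper's. Both arguments start the same way: pass to density matrices, note that $\rho_n$ is a unitary conjugate of $\rho$ so that its eigenvalues are fixed and its eigenvectors are $U_ne_k$, and use separability of $\H$ (forced by the existence of a faithful normal state). From there the paper proves norm convergence $\rho_n\to\sigma$ via the Hilbert--Schmidt estimate ${\rm Tr}((\sigma-\rho_n)^2)\leq 2\|\tilde\varphi-\varphi_n\|$, applies Stone--Weierstrass to get $\|f(\rho_n)-f(\sigma)\|\to 0$ for continuous $f$, deduces that the spectra coincide and that the spectral projections of $\rho_n$ converge in norm to those of $\sigma$, and then extracts a convergent subsequence of $U_n e_k$ by pushing $U_ne_k$ into the \emph{finite-dimensional} eigenspaces of $\sigma$ and using compactness there. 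You instead extract weak limits $\xi_k$ of $U_ne_k$ by weak sequential compactness plus a diagonal argument, identify $\sigma=\sum_k\lambda_k|\xi_k\rangle\langle\xi_k|$ by dominated convergence of matrix elements, and then use the trace identity $\sum_k\lambda_k\|\xi_k\|^2={\rm Tr}\,\sigma=1=\sum_k\lambda_k$ together with $\lambda_k>0$ to force $\|\xi_k\|=1$, which upgrades weak to norm convergence. This avoids the functional calculus and the spectral-projection machinery entirely, does not invoke finite multiplicity of eigenvalues, isolates cleanly where norm (as opposed to weak-$*$) convergence of the states is needed (to keep the limit in the predual so that no spectral weight is lost), and yields the unitarity criterion directly: the limit isometry $S$ is unitary precisely when $\ker\sigma=({\rm Ran}\,S)^\perp$ is trivial, i.e.\ when the limit state is faithful --- whereas the paper has to rerun its whole argument for $U_n^*$ with the roles of the two states exchanged. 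The one point to keep an eye on is the ${\rm Ad}$ convention, which you flag correctly: the paper's usage (consistent with its formula $D_n=U_nD_\varphi U_n^*$ and with the way the lemma is applied in the subsequent proposition) is ${\rm Ad}(U)(A)=U^*AU$, which is exactly the convention under which your computation controls $U_n$ rather than $U_n^*$.
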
 
\begin{proof} 
If ${\rm dim}(\H)<\infty$, then the statement is trivially true. 
On the other hand, as $\B(\H)$ was assumed to 
have a faithful normal state, $\H$ must be separable. So we may assume 
that $\H$ is the (up to unitary equivalence) unique infinite 
dimensional separable Hilbert space.

For each normal state $\eta$ there exists a unique positive 
trace-class operator $D_\eta\in\B(\H)$ such that
\begin{equation}
\eta(A)= {\rm Tr}(D_\eta A)
\end{equation}
for all $A\in\B(\H)$. Since $\varphi_n= \varphi\circ{\rm Ad}(U_n)$, we 
have that $D_n:=D_{\varphi_n}=U_n D_\varphi U_n^*$. Now let 
$\tilde{\varphi}$ be the assumed (norm) limit of $n\mapsto \varphi_n$, 
and consider the operator $D_{\tilde{\varphi}}$. We know that 
$\varphi_n\to \tilde{\varphi}$ in norm as $n\to \infty$. What does this 
tell us about $D_n\; (n\in \NN)$ and $D_{\tilde{\varphi}}$? Since 
$\|D_n-D_{\tilde{\varphi}}\|\leq 2$, as $n\to \infty$ we have
\begin{equation}
0\leq {\rm Tr}((D_{\tilde{\varphi}}-D_n)^2) 
=({\tilde{\varphi}}-\varphi_n)
(D_\Psi-D_n)\leq 2 
\|{\tilde{\varphi}}-\varphi_n\|\to 0.
\end{equation}
In particular, $D_n\to D_{\tilde{\varphi}}$ in norm. 
Let us see now what can we say 
about the convergence of spectrums and spectral projections.

Let $f$ be a continuous real function on $[0,1]$
and $\epsilon>0$. Then by the Stone-Weierstrass theorem there is a real 
polinomial $p$ such that $|f(x)-p(x)|<\epsilon/3$ for all $x\in [0,1]$. 
As ${\rm Sp}(D_n),\, {\rm Sp}(D_{\tilde{\varphi}})\subset [0,1]$, we have that 
both $\|f(D_{\tilde{\varphi}})-p(D_{\tilde{\varphi}})\|<\epsilon/3$ and 
$\|f(D_n)-p(D_n)\|<\epsilon/3$ for all $n\in\NN$.
On the other hand, as $p$ is a polynomial and $D_n\to D_{\tilde{\varphi}}$ 
in norm, there exists a $N\in\NN$ such that 
$\|p(D_n)-p(D_{\tilde{\varphi}})\|<\epsilon/3$ for all $n>N$. Thus for 
$n>N$ we have that
\begin{equation}
\|f(D_n)-f(D_{\tilde{\varphi}})\| 
\leq
\|f(D_n)-p(D_n)\| + \|p(D_n)-p(D_{\tilde{\varphi}})\| + 
\|p(D_{\tilde{\varphi}})-f(D_{\tilde{\varphi}})\| 
< \epsilon
\end{equation}
showing that $\|f(D_n)-f(D_{\tilde{\varphi}})\|\to 0$ as $n\to \infty$.

As was already noted, ${\rm Sp}(D_n)={\rm Sp}(D_\varphi)$ for 
every $n\in\NN$ because of unitary equivalence.
Now $D_\varphi, D_{\tilde{\varphi}}$ are {\it density operators}, 
so their spectrum is contained in $[0,1]$ and have at most one point of 
accumulation; namely, at zero. Moreover, each positive point of their 
spectrum must be an eigenvalue corresponding to a finite dimensional 
eigenspace.

Since the spectrum is compact, if $x\notin {\rm 
Sp}(D_\varphi)$, then there exists a 
continuous function $f$ on $[0,1]$ such that $f|_{{\rm Sp}(D_\varphi)}=0$ 
but $f(x)=1$. Thus
$f(D_\varphi)=0= f(D_n)$ so
by the established convergence property $f(D_{\tilde{\varphi}})=0$
showing that $x$ cannot be an eigenvalue for $D_{\tilde{\varphi}}$.
On the other hand, let us fix an eigenvalue $\lambda\in{\rm 
Sp}(D_\varphi)\setminus \{0\}$ and choose a continuous
function $f$ on $[0,1]$
such that $f(x)=0$ for all $x\in{\rm
Sp}(D_\varphi)\setminus \{\lambda\}$  
and $f(x)=1$ if and only if $x=\lambda$.
Then $f(D_n)$ is exactly the spectral projection associated 
to the eigenvalue $\lambda$ of $D_n$; in particular
$\|f(D_n)\|=1$ and $f(D_n)^2=f(D_n)^*=f(D_n)$. This shows that
$f(D_{\tilde{\varphi}})$ --- which is the norm-limit of $f(D_n)$
--- is also a nonzero projection.

Now $0\in {\rm Sp}(D_{\varphi})\cap {\rm Sp}
(D_{\tilde{\varphi}})$ since we are dealing with 
density operators given on an infinite-dimensional 
space. Moreover, $0$ is not an eigenvalue for $D_\varphi$
since $\varphi$ was assumed to be faithful.

Let us sum up what we have obtained so far. We have shown 
that ${\rm Sp}(D_\varphi)={\rm Sp}(D_n)={\rm Sp}(D_{\tilde{\varphi}})$
and that the for each eigenvalue $\lambda$ of $D_\varphi$, the 
spectral projections $E_{n,\lambda}$
of $D_n$ corresponding to the eigenvalue $\lambda$ converge
in norm to the spectral projection $E_{\tilde{\varphi},\lambda}$
of $D_{\tilde{\varphi}}$ corresponding to the same eigenvalue $\lambda$.

Let $\Phi$ be an eigenvector of $D_\varphi$ with eigenvalue 
$\lambda$. Then $\Phi_n:=U_n\Phi$ is an eigenvector of 
$D_n=U_n D_\varphi U_n^*$ with the same eigenvalue. To put it in another 
way, $(E_{n,\lambda}-\mathbbm 1) \Phi_n = 0$ implying that  
$\|(E_{\tilde{\varphi},\lambda} -\mathbbm 1)\Phi_n \|\to 0$ as $n\to 
\infty$ and so $n\mapsto\Phi_n$ (or a subsequence of it) converges if and 
only if $n\mapsto E_{\tilde{\varphi},\lambda}\Phi_n$ (or its subsequence 
in question) 
does so. 
But $n\mapsto E_{\tilde{\varphi},\lambda}\Phi_n$ ``runs'' in the unit ball 
of the 
finite dimensional space ${\rm Im}(E_{\tilde{\varphi},\lambda})$, so it 
admits a convergent subsequence.

Since $D_{\varphi}$ is a density operator, there exists a complete 
orthonormal system consisting of eigenvectors of $D_{\varphi}$, only. 
However, since $\H$ is separable, this system is countable. Thus by what 
was established, we may conclude the existence of a subsequence $s$ such 
that $n\mapsto U_{s(n)}$ strongly converges on each vector of this 
system, and hence --- as we are dealing with a sequence of unitary 
operators --- on every vector of $\H$. 

We are almost finished: we have proved the existence of a convergent 
subsequence. However, the limit of a strongly convergent 
sequence of unitary operators may not be again a unitary operator (in 
general, it is only an isometry). To show the existence of a unitary 
limit, we have to check the strong convergence of the adjoints. Then
to conclude our proof, all we have to note is that if $\tilde{\varphi}$ 
is faithful, then we may repeat our 
argument with the unitary sequence $n\mapsto U^*_{s(n)}$ and with the 
role of $\varphi$ and $\tilde{\varphi}$ exchanged.
\end{proof}

Recall that in this section we are dealing with a standard split inclusion
$(\Omega, \N\subset \M)$ in which at least one of the algebras is a 
factor, and a certain sequence of unitary operators $n\mapsto W_n$. In our 
case the adjoint action of $W_n$ does not 
necessarily preserve the canonical type I factor $\R_\Omega$. Rather,
we have that $W_n\R_\Omega W_n^*=\R_{\Omega_n}$ where $\R_{\Omega_n}$
is the canonical type I factor given by the vector $\Omega_n$. (Note that
$\Omega_n = W_n\Omega$ is automatically a standard vector for the 
inclusion $\N\subset \M$). All we can 
hope now that since $\Omega_n\to \Psi$, the type I factors $\R_{\Omega_n}$ 
will get ``closer and closer'' to the type I 
factor $\R_\Psi$. At this point, our previous lemma resolves
only the rather particular case when the adjoint action of $W_n$ 
actually {\it does} preserve $\R_\Omega$. However, this in turn will serve 
to prove the general case.
\begin{proposition}
Suppose that for all $n\in \NN$, the adjoint action of $W_n$ also 
preserves the canonical type I factor $\R_\Omega$. Then there exists a 
subsequence $s$ such that $n\mapsto W_{s(n)}$ strongly 
converges to a unitary operator.
\end{proposition}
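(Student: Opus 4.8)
The plan is to apply the preceding Lemma, but the natural choice of state --- the vacuum vector state $\omega_\Omega=\langle\Omega,\,\cdot\,\Omega\rangle$ --- is \emph{not} faithful on $\B(\H)$, being a pure state. This is the main obstacle, and it is exactly here that the extra hypothesis that $\mathrm{Ad}(W_n)$ preserves $\R_\Omega$ must be used. The idea is to descend to the type I factor $\R_\Omega$ together with its commutant, where the relevant restrictions of $\omega_\Omega$ \emph{do} become faithful. Since $\R_\Omega$ is a type I factor, I would fix a unitary identification $\H\cong\K_1\otimes\K_2$ under which $\R_\Omega=\B(\K_1)\otimes\mathbbm 1$ and $\R_\Omega'=\mathbbm 1\otimes\B(\K_2)$.

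First I would record the cyclicity/separation properties that make this work. Because $\N\subset\R_\Omega\subset\M$ and $\Omega$ is a standard vector, $\Omega$ is cyclic for $\N$ and separating for $\M$, hence cyclic and separating for $\R_\Omega$, and therefore also for $\R_\Omega'$; the very same reasoning applies to $\Psi$. Consequently the restrictions $\omega_\Omega|_{\R_\Omega}$ and $\omega_\Omega|_{\R_\Omega'}$ (and likewise those of $\omega_\Psi$) are faithful normal states, given by faithful density operators $\rho_1$ on $\K_1$ and $\rho_2$ on $\K_2$. Next, since $\mathrm{Ad}(W_n)$ preserves $\R_\Omega$ it also preserves $\R_\Omega'=(\R_\Omega)'$; and a unitary normalizing a type I factor in this tensor picture necessarily factorizes, so I would deduce $W_n=V_n\otimes V_n'$ for unitaries $V_n$ on $\K_1$ and $V_n'$ on $\K_2$ (uniquely determined up to a reciprocal phase, which is immaterial here).

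Finally I would assemble everything with a single application of the Lemma. Set $\varphi:=\varphi_1\otimes\varphi_2$, the faithful normal state on $\B(\H)$ whose density is $\rho_1\otimes\rho_2$. Using $W_n=V_n\otimes V_n'$ one computes that $\varphi\circ\mathrm{Ad}(W_n)$ is the product state whose two tensor factors are exactly the restrictions $\omega_{\Omega_n}|_{\R_\Omega}$ and $\omega_{\Omega_n}|_{\R_\Omega'}$. Since $\|\Omega_n-\Psi\|\to 0$ forces $\|\omega_{\Omega_n}-\omega_\Psi\|\to 0$, each factor converges in norm to the corresponding (faithful) restriction of $\omega_\Psi$; using $\|\alpha\otimes\beta\|=\|\alpha\|\,\|\beta\|$ for the predual norm, the product $\varphi\circ\mathrm{Ad}(W_n)$ then converges in norm to a faithful normal state. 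The Lemma now supplies a subsequence $s$ along which $W_{s(n)}$ converges strongly, and faithfulness of the limit state guarantees that the limit is unitary. The genuinely delicate point --- the crux of the whole argument --- is this reduction to $\R_\Omega$ and $\R_\Omega'$: before passing to the type I factor the vector state is not faithful and the compactness furnished by the Lemma is simply not available. (Alternatively, one could apply the Lemma twice, first on $\K_1$ and then, along the resulting subsequence, on $\K_2$, and recombine via $W_{s(n)}=V_{s(n)}\otimes V_{s(n)}'\to V\otimes V'$.)
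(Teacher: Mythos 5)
Your argument is correct: the cyclicity/separation bookkeeping, the factorization $W_n=V_n\otimes V_n'$ of a unitary normalizing the type I factor $\R_\Omega\cong\B(\K_1)\otimes\mathbbm 1$, the faithfulness of the product state $\varphi_1\otimes\varphi_2$, and the norm convergence of $\varphi\circ{\rm Ad}(W_n)$ all hold, so the Lemma applies and yields a unitary strong limit along a subsequence. The route differs from the paper's in the final assembly. The paper never factorizes $W_n$ and never needs to control the commutant $\R_\Omega'$: it picks a unitary $U_n\in\R_\Omega$ implementing ${\rm Ad}(W_n)|_{\R_\Omega}$ (your $V_n\otimes\mathbbm 1$), applies the Lemma only inside $\R_\Omega\cong\B(\K_1)$ with the faithful states $\omega_\Omega|_{\R_\Omega}$ and $\omega_\Psi|_{\R_\Omega}$ to obtain $U_{s(n)}\to U$ strongly, and then transports this to the full operators via the identity $W_nA\Omega=(U_nAU_n^*)W_n\Omega$ for $A\in\R_\Omega$: since $W_n\Omega\to\Psi$ and $\Omega$, $\Psi$ are cyclic for $\R_\Omega$, one gets strong convergence of $W_{s(n)}$ on all of $\H$ (and of the adjoints, whence unitarity of the limit). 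So the paper's version buys economy --- a single faithful state on $\R_\Omega$ suffices, and the commutant component of $W_n$ is handled for free by cyclicity --- while yours buys a symmetric, one-shot global application of the Lemma on $\B(\H)$, at the cost of the (easy) extra facts that $\omega_\Omega|_{\R_\Omega'}$ is faithful and that $W_n$ splits as a tensor product. Both are sound. One small remark: your identification of the tensor factors of $\varphi\circ{\rm Ad}(W_n)$ with the restrictions of $\omega_{\Omega_n}$ presupposes the convention ${\rm Ad}(W)(A)=W^*AW$ (the one the paper implicitly uses, under which $\omega\circ{\rm Ad}(W)=\omega_{W\Omega}$); with the opposite convention you would be tracking $W_n^*\Omega$ instead and would first have to run the argument for the adjoints --- harmless, but worth one explicit line if you write this up.
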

\begin{proof}
First let us note that if $n\mapsto A_n$ is a sequence of uniformly 
bounded operators converging strongly to a bounded operator $A$ then
also $A_n\otimes \mathbbm 1 \to A\otimes \mathbbm 1$ strongly, as $n\to 
\infty$. Indeed, convergence is clear on vectors of tensorial form, and 
hence on every vector as our sequence was assumed to be uniformly bounded.
In particular, if we identify $\R_\Omega$ with $\B(\K)$ (via an 
isomorphism) where $\K$ is some Hilbert space, then a sequence of unitary 
operators $n\mapsto U_n\in\R_\Omega$ is strongly converging to unitary 
operator of $\R_\Omega$ if and only if we have convergence in the topology 
given by the strong operator topology of $\B(\K)$.

So let now $\omega$ and $\psi$ be the normal states on $\R_\Omega$ given 
by the vectors $\Omega$ and $\Psi$, respectively. These states are 
faithful since the vectors in question are separating for $\M$ which 
contains $\R_\Omega$. 

Since $\R_\Omega$ is a type I factor, the adjoint action of $W_n$ 
in $\R_\Omega$ can be implemented by a unitary $U_n\in \R_\Omega$. 
We have that $\omega \circ{\rm Ad}(U_n)\to \psi$ in norm, since
$\|W_n\Omega-\Psi\|\to 0$. Thus our previous lemma can be applied, 
and by what was noted in the beginning of our proof, it shows that 
there exists a unitary $U\in\R_\Omega$ and a subsequence $s$ such that 
$U_{s(n)}\to U$ strongly (on our original Hilbert space, not only on $\K$)
as $n\to \infty$. Then for an $A\in \R_\Omega$ we have that 
as $n\to \infty$,
\begin{equation}
W_{s(n)} A \Omega = (U_{s(n)} A U_{s(n)} ^*) W_{s(n)} \Omega \to U A U ^* 
\Psi
\end{equation}
since $\|W_{s(n)}\Omega - \Psi\|\to 0$ and since the strong limit of a 
product of strongly convergent, uniformly bounded sequences is simply the 
product of the limits. Thus $n \mapsto W_{s(n)}$ is strongly convergent on 
$\overline{\R_\Omega\Omega}$, and $n \mapsto W_{s(n)}^*$ is strongly 
convergent on $\overline{\R_\Omega\Psi}$. Now both $\Omega$ and $\Psi$
are cyclic for $\N$ and hence for $\R_\Omega$, too; so actually we 
have shown that $n \mapsto W_{s(n)}$ converges strongly to a unitary 
operator.
\end{proof}

In our previous proposition we assumed $\R_n:=\R_{\Omega_n}$ to coincide 
with $\R_\Omega$. It is rather clear that this assumption is too strong; 
it will not hold in general. So now we shall see how we can ``correct'' 
$W_n$ by another unitary in order to have this property.

\begin{lemma}
Let $\Psi,\Psi_n\; (n\in\NN)$ be standard vectors for the 
split inclusion $\N\subset \M$ in which at least one of the algebras is a 
factor. If $\|\Psi_n-\Psi\|\to 0$
as $n\to \infty$, then there exists a sequence of unitaries $n\mapsto 
U_n\in \N'\cap \M$ strongly converging to the operator $\mathbbm 1$ such
that $U_n\R_{\Psi_n}U_n^* = \R_\Psi$ for all $n\in\NN$.
\end{lemma}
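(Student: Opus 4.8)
The plan is to exploit that the canonical type~I factor of a standard vector depends on that vector only through a single modular object. Write $\P:=\N'\cap\M$, and for a standard vector $\Phi$ let $J_\Phi$ denote the modular conjugation of $(\P,\Phi)$; then the two defining formulas for $\R_\Phi$, namely $\N\vee J_\Phi\N J_\Phi$ and $\M\cap J_\Phi\M J_\Phi$, both involve $\Phi$ solely through $J_\Phi$. Note that each of $\Psi,\Psi_n$ is cyclic and separating for $\P$ and for $\P'$. My strategy is, for each $n$, to manufacture a unitary $U_n\in\P=\N'\cap\M$ whose adjoint action carries $J_{\Psi_n}$ to $J_\Psi=:J$, and then to read off $U_n\R_{\Psi_n}U_n^*=\R_\Psi$ directly from these formulas. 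The identity that makes this work is that for a unitary $u\in\P$ one has $S_{u\Phi}=uS_\Phi u^*$ on the common core $\P\Phi=\P u\Phi$, whence $J_{u\Phi}=uJ_\Phi u^*$ by uniqueness of the polar decomposition.

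To build $U_n$ I would pass through the natural cone $\P^\natural_\Psi$ of $(\P,\Psi)$, which is also the natural cone of $(\P',\Psi)$. Fixing $n$, all vectors $u\Psi_n$ with $u\in\P$ unitary induce one and the same normal faithful state $\omega_{\Psi_n}|_{\P'}$ on $\P'$. Let $\eta_n$ be the unique vector in $\P^\natural_\Psi$ inducing this state on $\P'$. As the cone representative of a faithful state, $\eta_n$ is cyclic and separating, so $J_{\eta_n}=J$ by the defining property of the natural cone. Since $\eta_n$ and $\Psi_n$ induce the same state on $\P'$ and both are cyclic and separating, there is a unique unitary $U_n$ in the commutant $(\P')'=\P$ with $U_n\Psi_n=\eta_n$.

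Combining these facts gives $J=J_{\eta_n}=J_{U_n\Psi_n}=U_nJ_{\Psi_n}U_n^*$. Since $U_n\in\P\subset\N'\cap\M$, its adjoint action fixes $\N$ (because $U_n\in\N'$) and preserves $\M$ (because $U_n\in\M$); inserting $U_n^*U_n$ then gives, when $\N$ is a factor,
\[
U_n\R_{\Psi_n}U_n^* = \N\vee (U_nJ_{\Psi_n}U_n^*)\,\N\,(U_nJ_{\Psi_n}U_n^*) = \N\vee J\N J = \R_\Psi,
\]
and symmetrically $U_n\R_{\Psi_n}U_n^*=\M\cap J\M J=\R_\Psi$ when $\M$ is a factor. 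Thus $U_n\R_{\Psi_n}U_n^*=\R_\Psi$ for every $n$.

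It remains to show $U_n\to\mathbbm{1}$ strongly, and here I would use the quantitative continuity of the natural cone. From the estimates $\|\xi-\eta\|^2\le\|\omega_\xi-\omega_\eta\|\le\|\xi-\eta\|(\|\xi\|+\|\eta\|)$ for vectors of the cone, together with $\|\omega_{\Psi_n}-\omega_\Psi\|\to 0$ (a consequence of $\|\Psi_n-\Psi\|\to 0$) and the fact that $\Psi$ is itself the cone representative of $\omega_\Psi|_{\P'}$, one obtains $\eta_n=U_n\Psi_n\to\Psi$, and hence also $U_n\Psi\to\Psi$. Then for every $A'\in\P'$ we have $U_nA'\Psi=A'U_n\Psi\to A'\Psi$, since $U_n\in\P$ commutes with $A'$; as $\Psi$ is cyclic for $\P'$ (being separating for $\P$) and the $U_n$ are uniformly bounded, $U_n\to\mathbbm{1}$ strongly, completing the argument. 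The one point I expect to demand genuine care is the transformation rule $J_{u\Phi}=uJ_\Phi u^*$ for $u\in\P$: one is tempted to believe it only for unitaries of the commutant $\P'$, yet it is exactly its validity for $u\in\P$ that forces the correcting unitaries $U_n$ to lie in $\N'\cap\M$, as the statement demands.
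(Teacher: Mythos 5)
Your proof is correct and follows essentially the same strategy as the paper: represent the perturbed state by its unique vector $\eta_n$ in the natural cone of $\Psi$ for the relative commutant, intertwine $\Psi_n$ with $\eta_n$ by a unitary, read off the equality of canonical type I factors from the invariance of the modular conjugation on the cone, and get $U_n\to\mathbbm 1$ from the Powers--St{\o}rmer-type estimate $\|\eta_n-\Psi\|^2\le\|\omega_{\eta_n}-\omega_\Psi\|$ plus cyclicity. The only (welcome) difference in execution is that you match the states on the commutant $(\N'\cap\M)'$, so that the intertwiner lands directly in $\N'\cap\M$ via the rule $J_{u\Phi}=uJ_\Phi u^*$ for $u\in\N'\cap\M$, whereas the paper matches the states on $\N'\cap\M$ itself, obtains $U_n'\in(\N'\cap\M)'$, and must then pass to $U_n=J_\Psi U_n'J_\Psi$ to return to the relative commutant.
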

\begin{proof}
We may assume that the smaller algebra $\N$ is a 
factor. (If only $\M$ is a factor, then instead of the original inclusion
we may consider $(\Omega, \M'\subset \N')$ in which it is again the 
smaller algebra which is a factor.)
Let us denote by $\psi,\psi_n$ $(n\in \NN)$ the faithful normal states
on $\N'\cap \M$ given by the vectors
$\Psi,\Psi_n$ $(n\in \NN)$, respectively. The state $\psi_n$ has a
unique vector-representation
$\tilde{\Psi}_n$ in the natural cone of $(\Psi, \N'\cap \M)$. Note
that by construction, the modular conjugation
$J_{\tilde{\Psi}_n}$ associated to $(\tilde{\Psi}_n,\N'\cap \M)$
coincides with $J_{\Psi}$.
As both cyclic and separating vectors
$\Psi_n$ and $\tilde{\Psi}_n$ implement the same state on $\N'\cap
\M$, there exists a unitary
$U'_n\in (\N'\cap \M)'$ such that $U'_n\Psi_n = \tilde{\Psi}_n$, or
equivalently, that
$U'^*_n\tilde{\Psi}_n = \Psi_n$. As the adjoint action of $U'_n$
preserves both $\N,\M$ and $\N'\cap \M$, we have that
\begin{equation}
U'^*_n J_{\Psi} U'_n = U'^*_n J_{\tilde{\Psi}_n} U'_n  =
J_{U'^*_n\tilde{\Psi}_n} = J_{\Psi_n}.
\end{equation}
Moreover, as rather evidently $J_{\Psi}U'^*_n \N U'_n J_{\Psi} \subset
\N'\cap \M$, we also
have that $U'_n$ is in the commutant of $J_{\Psi}U'^*_n \N U'_n
J_{\Psi}$
and
\begin{eqnarray}
\nonumber
J_{\Psi_n}\N J_{\Psi_n} &=& U'^*_n  J_{\Psi} U'_n \N U'^*_n
J_{\Psi} U'_n =
J_{\Psi} U'_n \N U'^*_n J_{\Psi}
\\
&=&
(J_{\Psi} U'_n J_{\Psi}) J_{\Psi} \N J_{\Psi} (J_{\Psi}U'_n
J_{\Psi})^*
=
U_n J_{\Psi_n} \N J_{\Psi_n} U^*_n,
\end{eqnarray}
where $U_n=J_{\Psi} U'_n J_{\Psi}$ is a unitary in the relative 
commutant $\N'\cap \M$.
Now the sequence of states $n\mapsto \psi_n$ clearly 
converges to $\psi$ in norm
(since $\Psi_n\to \Psi$ as $n\to \infty$). It follows that the
distance between the vectors $\tilde{\Psi}_n$ and $\Psi$, both elements 
of the the natural cone of $(\Psi, 
\N'\cap \M)$, also goes to zero as $n\to \infty$.
Now
$U_n\Psi= J_{\Psi}U'_n J_\Psi\Psi = J_{\Psi}U'_n \Psi$
and $\|J_{\Psi}U'_n \Psi - J_{\Psi}U'_n \Psi_n\|  = 
\|\Psi-\Psi_n\|\to 0$,
so $n\mapsto U_n\Psi$ is convergent as in fact
\begin{equation}
\lim_n(U_n\Psi) = \lim_n(J_\Psi U'_n\Psi_n) =
\lim_n (J_\Psi\tilde{\Psi}_n)= J_\Psi \Psi = \Psi 
\end{equation}
Since $U_n\in \N'\cap \M \subset \M$, the above shows that
$n\mapsto U_n$ strongly converges to the identity on the closure 
of $\M'\Psi$ and hence everywhere (as $\Psi$ is cyclic and separating for 
$\M$ and so for $\M'$, too).
\end{proof}
\begin{corollary}
\label{main_cor}
Under the assumptions explained in the beginning of this section, 
it follows that there exists a subsequence $s$ such that $n\mapsto 
W_{s(n)}$ strongly converges to a unitary operator.
\end{corollary}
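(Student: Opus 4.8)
The plan is to reduce the general situation to the already-settled case of the previous Proposition by \emph{correcting} the sequence $n\mapsto W_n$ so that it preserves a fixed canonical type I factor. Recall that $W_n\R_\Omega W_n^*=\R_{\Omega_n}$ with $\Omega_n=W_n\Omega\to\Psi$. First I would apply the previous Lemma with $\Psi_n:=\Omega_n$ (which converge in norm to the standard vector $\Psi$): this produces unitaries $U_n\in\N'\cap\M$ with $U_n\to\mathbbm 1$ strongly and $U_n\R_{\Omega_n}U_n^*=\R_\Psi$. Setting $\hat W_n:=U_nW_n$, the new sequence still has its adjoint action preserving $\N$ and $\M$ (since $U_n\in\N'\cap\M$ normalises both), it still satisfies $\hat W_n\Omega=U_n\Omega_n\to\Psi$, and now $\hat W_n\R_\Omega\hat W_n^*=U_n\R_{\Omega_n}U_n^*=\R_\Psi$ for every $n$.

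At this point the natural wish would be to invoke the Proposition directly on $\hat W_n$, but here lies the one genuine obstacle: the Proposition requires the adjoint action to preserve the fixed factor $\R_\Omega$, whereas $\hat W_n$ carries the fixed factor $\R_\Omega$ onto the \emph{different} fixed factor $\R_\Psi$. Since $\R_\Psi\neq\R_\Omega$ in general (the modular conjugations $J_\Omega$ and $J_\Psi$ reflect $\N$ differently), $\mathrm{Ad}(\hat W_n)$ is an isomorphism $\R_\Omega\to\R_\Psi$ rather than an automorphism, so it cannot be implemented internally by a unitary of $\R_\Omega$ the way the Proposition's proof demands. The remedy is to bridge the two factors by a \emph{single, index-independent} unitary. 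I would fix one index $n_0$ and put $T:=\hat W_{n_0}^*$; then $T$ normalises $\N$ and $\M$, hence also $\N'\cap\M$, and, because $\hat W_{n_0}\R_\Omega\hat W_{n_0}^*=\R_\Psi$, it satisfies $T\R_\Psi T^*=\R_\Omega$.

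Now I would consider $\check W_n:=T\hat W_n=\hat W_{n_0}^*\hat W_n$. By construction its adjoint action preserves $\N$ and $\M$, it genuinely preserves $\R_\Omega$ (indeed $\check W_n\R_\Omega\check W_n^*=T\R_\Psi T^*=\R_\Omega$), and $\check W_n\Omega=T\hat W_n\Omega\to T\Psi$, where $T\Psi$ is again a standard vector because $T$ normalises $\N$, $\M$ and $\N'\cap\M$. Thus $\check W_n$ meets the hypotheses of the Proposition (with $T\Psi$ in the role of the limit vector, which the Proposition's proof treats on exactly the same footing as $\Psi$, using only that both $\Omega$ and the limit are cyclic and separating for $\N\subset\R_\Omega\subset\M$). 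The Proposition then supplies a subsequence $s$ with $\check W_{s(n)}\to\check W$ strongly for some unitary $\check W$.

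It remains only to undo the two corrections, and each is harmless because it is effected by a convergent factor. Since $\hat W_n=\hat W_{n_0}\check W_n$ and $\hat W_{n_0}$ is a fixed unitary, $\hat W_{s(n)}=\hat W_{n_0}\check W_{s(n)}\to\hat W_{n_0}\check W$ strongly, again a unitary. Finally $W_{s(n)}=U_{s(n)}^*\hat W_{s(n)}$, and as $U_{s(n)}^*\to\mathbbm 1$ strongly while $\hat W_{s(n)}$ converges strongly, the strong limit of a product of uniformly bounded strongly convergent sequences gives $W_{s(n)}\to\hat W_{n_0}\check W$ strongly, a unitary operator. This is precisely the asserted convergent subsequence. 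The only delicate point is the one flagged above: recognising that the correction delivered by the Lemma lands in $\R_\Psi$ rather than $\R_\Omega$, and inserting the fixed bridging unitary $T$ to restore a genuine automorphism of $\R_\Omega$ before the Proposition can be brought to bear.
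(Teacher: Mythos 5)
Your proof is correct and follows the route the paper clearly intends: correct $W_n$ by the unitaries $U_n\in\N'\cap\M$ supplied by the preceding Lemma, then feed the corrected sequence into the Proposition. Since the paper states the Corollary with no written proof, it is worth noting that your argument supplies exactly the one point that genuinely needs attention: the Lemma conjugates $\R_{\Omega_n}$ onto $\R_\Psi$, not onto $\R_\Omega$, so ${\rm Ad}(U_nW_n)$ is an isomorphism $\R_\Omega\to\R_\Psi$ rather than the automorphism of $\R_\Omega$ that the Proposition requires (its proof implements that automorphism by a unitary \emph{inside} $\R_\Omega$, which only makes sense for an automorphism). Your fix --- composing with the fixed unitary $T=\hat W_{n_0}^*$, applying the Proposition with the standard limit vector $T\Psi$ (whose proof indeed uses nothing about the limit vector beyond its being standard), and then undoing the two corrections, each effected by a strongly convergent or constant unitary factor --- is sound. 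The only step left implicit is that $U_n\to\mathbbm 1$ strongly forces $U_{n}^*\to\mathbbm 1$ strongly, which holds for unitaries since $\|U_n^*\xi-\xi\|=\|U_n(U_n^*\xi-\xi)\|=\|\xi-U_n\xi\|$.
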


\section{Equivalence of models}
\label{sec:main}

Fix a space-like hyperplane $H$, and 
let further $\tau$ be a nonzero translation such that $\tau(H)=H$.
Fix a plane $N$ in $H$ which is orthogonal to the direction of the 
translation $\tau$. 
Then $H\setminus N$ is the disjoint union of two open ``half-spaces'' 
$H^+$ and $H^-$. Here the ``$+$'' and ``$-$'' signs are given in such a 
way that $\tau(H^+)\subset H^+$ while $\tau(H^-)\supset H^-$. 

Note that 
$\W^\pm:=(H^\pm)^\diamond$ are wedge-regions
such that the causal complement of any of them is exactly (the closure of) 
the other. Moreover, we have
$\tau(\W^+)\subset \W^+$ and 
$\tau(\W^-)\supset \W^-$, 
and that $\cup_{n\in\NN} \tau^n(\W^-)$ is the full spacetime. Hence 
if $(\A,U_\A)$ is an algebraic QFT given on the Hilbert space $\H$
satisfying axioms (1,2,3,4,5) discussed 
in the introduction, then $n\mapsto \A(\tau^n(\W^-))$ is an increasing sequence 
of von Neumann algebras such that its union is dense  
(w.r.t.\! the strong op.\! topology) in $\B(\H)$. Then 
for the decreasing sequence $n\mapsto \A(\tau^n(\W^+))$, 
by {\it locality} we have that $\cap_{n\in\NN} \A(\tau^n(\W^+))=
\CC\mathbbm 1$.

In our main theorem --- apart from many other things --- we shall also use 
a rather well-known fact concerning a decreasing sequence of von Neumann 
algebras and distances of restrictions of states. However, for reasons of 
self-containment we shall outline the proof of this fact (which is anyway 
short).
\begin{lemma}
Let $\M_1\supset \M_2\supset\M_3\supset \ldots$ be a decreasing sequence 
of von Neumann algebras on a Hilbert space $\H$ with $\cap_{n\in\NN} 
\M_n=\CC\mathbbm 1$ (or equivalently: with 
$\{\cup_{n\in\NN}\M_n'\}''=\B(\H)$). Let further $\psi, \tilde{\psi}$ be 
two normal states on $\M_1$. Then for the restriction of states $\psi_n:= 
\psi|_{\M_n},\, \tilde{\psi}_n:=\tilde{\psi}|_{\M_n}$ we have
$$
\|\psi_n-\tilde{\psi}_n\| \to 0
$$
as $n\to \infty$
\end{lemma}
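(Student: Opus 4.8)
The plan is to move everything to the level of trace-class operators and then run a weak-compactness argument. First I would recall that every normal state on $\M_1$ is the restriction of a normal state on $\B(\H)$, so there exist density operators $\rho,\tilde\rho$ (positive, trace class, with ${\rm Tr}(\rho)={\rm Tr}(\tilde\rho)=1$) such that $\psi(A)={\rm Tr}(\rho A)$ and $\tilde\psi(A)={\rm Tr}(\tilde\rho A)$ for all $A\in\M_1$. Setting $\sigma:=\rho-\tilde\rho$, this is a self-adjoint trace-class operator with ${\rm Tr}(\sigma)=0$, and $\|\psi_n-\tilde\psi_n\|=\sup\{|{\rm Tr}(\sigma A)|:A\in\M_n,\ \|A\|\le 1\}$. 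Since the unit ball of $\M_{n+1}$ is contained in that of $\M_n$, the sequence $d_n:=\|\psi_n-\tilde\psi_n\|$ is nonincreasing, hence convergent to some $d\ge 0$; the whole task is to show $d=0$.

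Assume toward a contradiction that $d>0$. For each $n$ I would pick $A_n\in\M_n$ with $\|A_n\|\le 1$ and $|{\rm Tr}(\sigma A_n)|\ge d/2$. By weak compactness of the unit ball of $\B(\H)$, the sequence $(A_n)$ has a weak cluster point $A$, and I may pass to a subnet $A_\alpha\to A$ in the weak operator topology whose indices tend to infinity (along which the bound $|{\rm Tr}(\sigma A_\alpha)|\ge d/2$ persists). The key structural observation is that this limit must be a scalar: for any fixed $m$ one eventually has $A_\alpha\in\M_{\text{index}}\subset\M_m$, and since the unit ball of the von Neumann algebra $\M_m$ is weakly closed, the limit satisfies $A\in\M_m$. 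As this holds for every $m$, we obtain $A\in\cap_m\M_m=\CC\mathbbm 1$, that is $A=c\mathbbm 1$ with $|c|\le 1$; this is exactly the step that uses the hypothesis $\cap_m\M_m=\CC\mathbbm 1$.

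It remains to evaluate the functional along the subnet. Here I would invoke the standard fact that on norm-bounded subsets of $\B(\H)$ the weak operator topology coincides with the $\sigma$-weak topology: approximating $\sigma$ in trace norm by a finite-rank operator $F$ reduces ${\rm Tr}(\sigma\,\cdot\,)$ to a finite sum of weak-operator-continuous functionals, up to an error bounded by $\|\sigma-F\|_1$ uniformly on the unit ball. Consequently ${\rm Tr}(\sigma A_\alpha)\to{\rm Tr}(\sigma A)={\rm Tr}(\sigma\,c\mathbbm 1)=c\,{\rm Tr}(\sigma)=0$, contradicting $|{\rm Tr}(\sigma A_\alpha)|\ge d/2>0$. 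Hence $d=0$, which is the assertion.

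I expect the only genuinely delicate point to be this last one: weak-operator convergence of a bounded net does not a priori respect a trace-class functional, and it is precisely the trace-norm approximation by finite-rank operators (equivalently, the coincidence of the weak and $\sigma$-weak topologies on bounded sets) that restores continuity of ${\rm Tr}(\sigma\,\cdot\,)$ along the extracted subnet. The identification of the limit as an element of $\cap_m\M_m$ is the other essential ingredient, but it is a soft consequence of the weak closedness of the unit balls of the $\M_m$, and it is where the triviality of the intersection is fed into the argument.
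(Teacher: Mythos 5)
Your argument is correct, but it runs on the opposite side of the duality from the paper's. The paper works on the Hilbert space: it represents $\psi,\tilde\psi$ by vectors $\Psi,\tilde\Psi$ (after passing to a suitable amplification), joins them by a unitary $V=e^{iA}$ with $\|A\|\leq\pi$, and uses the hypothesis in its commutant form $\{\cup_n\M_n'\}''=\B(\H)$ together with Kaplansky's density theorem to produce self-adjoint $A_n'\in\M_n'$ with $\|A_n'\|\leq\pi$ converging strongly to $A$; then $U_n':=e^{iA_n'}\in\M_n'$ satisfies $U_n'\Psi\to\tilde\Psi$, and since $U_n'\Psi$ and $\Psi$ induce the same state on $\M_n$, the explicit estimate $\|\psi_n-\tilde\psi_n\|\leq 2\|U_n'\Psi-\tilde\Psi\|$ finishes the proof. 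You instead work in the predual: you encode $\psi-\tilde\psi$ as a traceless trace-class operator $\sigma$, extract a WOT-convergent subnet of near-optimal test elements from the compact unit ball, identify the limit as a scalar using $\cap_n\M_n=\CC\mathbbm 1$ directly (via weak closedness of the unit balls of the $\M_n$), and conclude by the agreement of the weak and $\sigma$-weak topologies on bounded sets. Each step of yours is sound, including the two points you flag as delicate. The trade-off: the paper's route is constructive and produces a concrete majorant for $\|\psi_n-\tilde\psi_n\|$ in terms of a vector distance, at the cost of invoking Kaplansky and the vector representation of normal states; yours is a soft compactness argument by contradiction, which avoids both but yields no quantitative information and consumes the hypothesis in its intersection form rather than its commutant form.
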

\begin{proof}
Clearly, the validity of the statement does not depend on the  
``underlying'' Hilbert space. So we may assume that both states 
on $\M_1$ can be represented by vectors in $\H$; say $\Psi$ is a representative 
vector for $\psi$ and $\tilde{\Psi}$ is a representative vector for 
$\tilde{\psi}$.

Any two unit-vectors can be connected by a unitary operator, so let $V$ be a 
unitary operator such that $V\Psi =  \tilde{\Psi}$. We may write $V$ in the form
$V=e^{iA}$ where $A$ is self-adjoint operator with spectrum ${\rm Sp}(A)\subset 
[-\pi,\pi]$ and hence $\|A\|\leq \pi$. Now  $\cup_{n\in\NN}\M_n'$ is dense 
in $\B(\H)$ in the strong operator topology. Thus by an application of Kaplansky's 
density theorem there exists a sequence of self-adjoints $n\mapsto A_n'\in\M_n'$
strongly converging to $A$ such that $\|A_n\|\leq \pi$ for all $n\in\NN$. Then 
$n\mapsto U_n':=e^{iA_n'}\in\M_n'$ is a sequence of unitary operators strongly 
converging to $V$; in particular $U_n'\Psi\to \tilde{\Psi}$ as $n\to \infty$.

For the von Neumann algebra $\M_n$ the vectors $\Psi$ and $U_n'\Psi$ represent
the same state. Hence as $n\to \infty$,
\begin{equation}
\|\psi_n-\tilde{\psi}_n\| \leq 2 \|U_n'\Psi-\tilde{\Psi}\| \to 0
\end{equation}
which is exactly what we have claimed.
\end{proof}
For what follows, recall our definition of a {\it double-cone} $K^\diamond$
with base $K$. Recall also
that in the beginning of this section 
we have fixed a spacelike hyperplane $H$ and 
some further objects related to $H$. 
\begin{theorem}
Let $(\A,U)$ and $(\tilde{\A},\tilde{U})$ be two algebraic 
QFT models on the $d+1$ dimensional Minkowskian spacetime
satisfying the basic requirements (1,2,3,4,5,6) as well as 
the Bisognano-Wichmann (7) and split (8) properties.
If there exists a unitary
$V$ such that
$$
V\A(K^\diamond)V^* = \tilde{\A}(K^\diamond)
$$
for every double-cone $K^\diamond$ with base $K\subset H^+$,
then the two models are equivalent. That is, there exists a unitary 
operator $W$ such that $W\A({\mathcal O})W^*=\tilde{\A}({\mathcal O})$ for all double-cones 
${\mathcal O}$ and $WU(g)W^*= 
\tilde{U}(g)$ for all elements $g$ of the connected part of the Poincar\'e 
group.
\end{theorem}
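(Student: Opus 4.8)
The plan is to reduce to the case $V=\mathbbm 1$ and then manufacture a sequence of unitaries that preserves a fixed split inclusion, whose action on the vacuum converges, and whose strong limit is the sought-after $W$. First I would conjugate the second model by $V^*$: replacing $\tilde{\A}(\cdot),\tilde U(\cdot),\tilde{\Omega}$ by $V^*\tilde{\A}(\cdot)V,\,V^*\tilde U(\cdot)V,\,V^*\tilde{\Omega}$ does not affect the conclusion and lets us assume $V=\mathbbm 1$, so that the two nets \emph{coincide} on every double-cone with base in $H^+$; in particular $\M:=\A(\W^+)=\tilde{\A}(\W^+)$ and $\A(\tau^n\W^+)=\tilde{\A}(\tau^n\W^+)$ for all $n\geq 0$. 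By Poincar\'e invariance we may take $H=\{x^0=0\}$ and $\tau$ a unit $x^1$-translation, so that $\W^+$ is the right wedge. Set $W_n:=\tilde U(\tau^{-n})U(\tau^n)$. Since $\tau^nK\subset H^+$ whenever $K\subset H^+$, the covariance of each net together with their coincidence on base-$H^+$ regions yields $W_n\A(K^\diamond)W_n^*=\A(K^\diamond)$ for every such $K$ and every $n$; thus ${\rm Ad}(W_n)$ preserves $\M$ as well as all the algebras $\A(K^\diamond)$ and $\A(\tau^k\W^+)$ with base in $H^+$.

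The point of letting $n$ grow is to steer the vacuum. Writing $\omega_\xi(\cdot):=\langle\xi,\,\cdot\,\xi\rangle$ and using $U(\tau)\Omega=\Omega$, one has $W_n\Omega=\tilde U(\tau^{-n})\Omega$, and for $A\in\M$ both $\omega_{W_n\Omega}(A)=\omega_\Omega(\beta_n(A))$ and $\omega_{\tilde{\Omega}}(A)=\omega_{\tilde{\Omega}}(\beta_n(A))$, where $\beta_n:={\rm Ad}(\tilde U(\tau^n))$ maps $\M$ isometrically onto $\A(\tau^n\W^+)$. Hence the difference of these two states on $\M$ has norm equal to that of $\omega_\Omega-\omega_{\tilde{\Omega}}$ restricted to $\A(\tau^n\W^+)$. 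Since $\{\A(\tau^n\W^+)\}_n$ is a decreasing sequence with $\cap_n\A(\tau^n\W^+)=\CC\mathbbm 1$ (by locality, as recorded at the start of this section), the preceding Lemma on decreasing sequences gives $\|\omega_{W_n\Omega}-\omega_{\tilde{\Omega}}\|_\M\to 0$.

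Next I would upgrade this state-convergence to a vector-convergence after a harmless correction. Both $W_n\Omega$ and $\tilde{\Omega}$ are cyclic and separating for the factor $\M$, so the Powers--St\o rmer inequality applied in the natural cone of $(\M,\tilde{\Omega})$ produces unitaries $u_n\in\M'$ with $u_nW_n\Omega\to\tilde{\Omega}$ in norm; put $W'_n:=u_nW_n$. As $u_n\in\M'$ acts trivially on every subalgebra of $\M$, the operator $W'_n$ still preserves $\M$ and all base-$H^+$ algebras. Fixing a double-cone $K^\diamond$ with $\overline{K^\diamond}\subset\W^+$ and $K\subset H^+$, the inclusion $\A(K^\diamond)\subset\M$ is a standard split inclusion whose larger algebra $\M=\A(\W^+)$ is a type ${\rm III}_1$ factor (by the cited result of Longo), it is preserved by ${\rm Ad}(W'_n)$, and $W'_n\Omega\to\tilde{\Omega}$ with $\tilde{\Omega}$ standard. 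Corollary \ref{main_cor} then supplies a subsequence with $W'_{s(n)}\to W$ strongly, $W$ unitary. In particular $W\Omega=\tilde{\Omega}$ and, passing the preservation property to the strong limit (legitimate because $W$ is unitary, so the adjoints converge too), ${\rm Ad}(W)$ preserves $\M$ and every $\A(K^\diamond),\A(\tau^k\W^+)$ with base in $H^+$.

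Finally I would propagate this equivalence over all of spacetime using the Bisognano--Wichmann property. Since $W\Omega=\tilde{\Omega}$ and $W\A(\W^+)W^*=\tilde{\A}(\W^+)$, the transformation rule $W\Delta_{\A(\W^+),\Omega}^{it}W^*=\Delta_{\tilde{\A}(\W^+),\tilde{\Omega}}^{it}$ together with (7) gives $WU(\beta_t)W^*=\tilde U(\beta_t)$ for the boost of $\W^+$, and likewise for $\tau\W^+$. The theory of half-sided modular inclusions reconstructs the translation $U(\tau)$ canonically from these (now intertwined) modular data, and together with the boost this recovers the whole $(x^0,x^1)$-translation group; conjugating base-$H^+$ double-cones by $\tau^{-m}$ and then translating in time exhausts all double-cones, so $W\A({\mathcal O})W^*=\tilde{\A}({\mathcal O})$ for every ${\mathcal O}$. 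Now (7) may be applied to \emph{every} wedge, whence $W$ intertwines all wedge boosts; as these generate the connected Poincar\'e group, $WU(g)W^*=\tilde U(g)$ for all $g$, and undoing the initial conjugation by $V$ finishes the proof. The main obstacle is exactly the one flagged in the introduction: there is no obvious reason the full sequence $W_n$ converges, and it is the split property --- through Corollary \ref{main_cor} --- that turns the vacuum-state convergence of the second step into the compactness giving a convergent \emph{sub}sequence; the accompanying subtlety is to choose the corrections $u_n$ so that they simultaneously fix the split inclusion and pull $W_n\Omega$ onto $\tilde{\Omega}$.
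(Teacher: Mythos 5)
Your construction of the unitary $W$ is, in substance, the paper's own proof: reduce to $V=\mathbbm 1$; take $W_n=\tilde U(\tau^{-n})U(\tau^n)$, whose adjoint action preserves every base-$H^+$ algebra and which sends $\Omega$ to $\tilde U(\tau^{-n})\Omega$; use the decreasing-sequence lemma on $\A(\tau^n\W^+)$ to get $\|\omega_{W_n\Omega}-\omega_{\tilde\Omega}\|_{\A(\W^+)}\to 0$; upgrade to vector convergence via natural-cone representatives (your a posteriori correction by $u_n\in\M'$ is the same device as the paper's choice of implementer in the cone together with its preliminary rotation of $\tilde\Omega$ by a unitary in $\A(\W^+)'$); and feed a split inclusion $\A(K^\diamond)\subset\A(\W^+)$ into Corollary \ref{main_cor} to extract a strongly convergent subsequence with unitary limit satisfying $W\Omega=\tilde\Omega$. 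All of that is correct and matches the paper.

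The gap is in the final propagation step. The claim that ``conjugating base-$H^+$ double-cones by $\tau^{-m}$ and then translating in time exhausts all double-cones'' is false: a double-cone whose base lies on a spacelike hyperplane \emph{not parallel} to $H$ is not a time-translate of any double-cone with base on $H$ (already in $1+1$ dimensions: in null coordinates the causal completion of a tilted segment is a rectangle with unequal side lengths, whereas every translate of an $H$-based double-cone is a ``square'' one). Consequently the assertion that ``(7) may be applied to every wedge'' is not yet justified: at that point you only control wedges obtained from $\W^{\pm}$ by $(x^0,x^1)$-translations, and for $d\geq 2$ the group generated by those translations and the single $\W^+$-boost is a proper subgroup of the connected Poincar\'e group --- it contains no rotations and no boosts in other spatial directions, so you cannot reach the tilted double-cones or the differently oriented wedges this way. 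The repair is exactly the paper's route: from $WU(\tau)W^*=\tilde U(\tau)$ one first gets $W\A(K^\diamond)W^*=\tilde\A(K^\diamond)$ for every double-cone with base anywhere in $H$, hence for every wedge whose \emph{edge} lies in $H$ --- and these wedges realize every spatial orientation; the Bisognano--Wichmann property applied to all of them yields the boosts in every spatial direction together with their translates, and these do generate the whole connected Poincar\'e group. Only after that does covariance deliver every double-cone, since every double-cone is a Poincar\'e transform of one with base on $H$.
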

\begin{proof}
Let $\Omega$ and $\tilde{\Omega}$ be the (up to phase unique, normalized) 
vacuum vectors for $U$ and $\tilde{U}$, respectively. We may assume that 
the two models are given on the same Hilbert space $\H$ and that $V$ is the 
identity operator so that actually $\A(K^\diamond)=\tilde{\A}(K^\diamond)$ 
for every double-cone $K^\diamond$ with base $K\subset H^+$.  

Remember we defined $\A(\W^+)$ to be the von Neumann algebra 
generated by {\it all} local algebras $\A({\mathcal O})$ with ${\mathcal O}\subset \W^+$. That 
is, theoretically we should take account of {\it all} double-cones 
included in $\W^+$ and 
not only those with bases on $H^+$. However, it is easy to see that one 
can take an increasing sequence of double-cones $n\mapsto K_n^\diamond$ 
with bases on $H^+$ such that not only $\cup_{n\in\NN} K_n^\diamond = 
\W^+$, but actually every bounded region ${\mathcal O}\subset \W^+$ is included in 
$K_n^\diamond$ for {\it some} $n\in\NN$. Then by isotony 
$\A(\W^+)=\{\cup_{n\in\NN}\A(K^\diamond_n)\}''$. So the assumed equality 
implies that $\A$ and $\tilde{\A}$ coincide on $\W^+$, too.

We may assume that $\tilde{\Omega}$ is in the natural cone of $(\Omega, 
\A(\W^+))$.
Indeed, suppose originally it was not so, and consider the state on 
$\A(\W^+)$ given by the vector $\tilde{\Omega}$. This state has a unique 
representative vector $\tilde{\Omega}^\natural$ in the cone in 
question. Since $\tilde{\Omega}$ is cyclic and separating for $\A(\W^+)$, 
the corresponding state is faithful, $\tilde{\Omega}^\natural$ is also 
cyclic and separating for $\A(\W^+)$ and there exists unitary 
$V'\in\A(\W^+)'$ such 
that $V'\Omega = \tilde{\Omega}^\natural$. Then we may replace 
$(\tilde{\A},\tilde{U})$ 
with vacuum vector $\tilde{\Omega}$ by $(V\tilde{\A} V'^*, V\tilde{U} 
V'^*)$ with vacuum vector $\tilde{\Omega}^\natural$. For this latter 
choice we have the desired 
property that its vacuum vector is in the required cone, and 
since $V'$ commutes with all algebras $\A({\mathcal O})\subset \A(\W^+)\; ({\mathcal O}\subset 
\W^+)$, we still have that $\A(K^\diamond) = \tilde{\A}(K^\diamond) = 
V'\tilde{\A}(K^\diamond)V'^*$ 
for every double-cone $K^\diamond$ with base $K\subset H^+$.

Let $\gamma_n$ be the adjoint action of the product 
$U(\tau^n)^*\tilde{U}(\tau^n)$. By what was assumed, we have that for 
every $n\in\NN$
\begin{equation}
\gamma_n(\A(K^\diamond)) = \A(K^\diamond) \; \textrm{ for every double 
cone } K^\diamond \textrm{ with base } K\subset H^+.
\end{equation}
Now let $\omega$ and $\tilde{\omega}$ be the faithful normal states on 
$\A(\W^+)$ given by the vectors $\Omega$ and $\tilde{\Omega}$, 
respectively. 
Then $\omega\circ\gamma_n$ is
nothing else than the state given by the vector 
$\tilde{U}(\tau^n)^*U(\tau^n)\Omega = \tilde{U}(\tau^n)^*\Omega$;
so $\omega\circ\gamma_n = \omega\circ{\rm Ad}(\tilde{U}(\tau^n))$.
On the other hand, $\tilde{\omega}\circ{\rm Ad}(\tilde{U}(\tau^n))= 
\tilde{\omega}$ since
$\tilde{\Omega}$ is an invariant vector for $\tilde{U}(\tau)$. Putting it 
together, and applying our previous lemma we have that 
\begin{equation}
\|\omega\circ\gamma_n - \tilde{\omega} \| =
\|(\omega - \tilde{\omega})\circ{\rm Ad}(\tilde{U}(\tau^n))\| =
\|({\omega}|_{\A(\tau^n(\W^+))} - {\tilde{\omega}}|_{\A(\tau^n(\W^+))}\| 
\to 0
\end{equation}
as $n\to \infty$, since on $\tau^n(\W^+)\subset \W^+$ (by a similar 
argument than that used for $\W^+$) the nets $\A$ and $\tilde{\A}$
coincide and hence
$\tilde{U}(\tau^n)\A(\W^+)\tilde{U}(\tau^n)^* = \A(\tau^n(\W^+))$ for 
every $n\in\NN$.

Since $\Omega$ is cyclic and separating for $\A(\W^+)$, we can find a 
unitary $W_n$ implementing $\gamma_n$ on $\A(\W^+)$ such that 
$W_n\Omega$ is in the natural cone of $(\Omega, \A(\W^+))$. Then 
$W_n\Omega$ and $\tilde{\Omega}$ are exactly the vector representatives in 
the specified natural cone of the states $\omega\circ\gamma_n$ and 
$\tilde{\omega}$, respectively. Thus by the established norm convergence 
of states we have that $\|W_n\Omega-\tilde{\Omega}\|\to 0$ as $n\to 
\infty$.

Let $K^\diamond$ be a nonempty double-cone with base $K\subset 
\tau(H^+)\subset H^+$. Then
the {\it split property} together {\it isotony} 
and {\it Reeh-Schlieder property} imply that   
$\A(K^\diamond)\subset \A(\W^+)$ is a split 
inclusion for which the vacuum vectors 
$\Omega,\tilde{\Omega}$ are standard vectors. 
Moreover, as was discussed in section \ref{sec:prel}, $\A(\W^+)$
is a factor.
Hence by Corollary \ref{main_cor} there
exists unitary operator $W$ and a subsequence $s$ such that 
$n\mapsto W_{s(n)}$ converges strongly to $W$.

It is evident that for the limit $W$ we still have that 
$W\A(K^\diamond)W^*=\A(K^\diamond)=\tilde{A}(K^\diamond)$ for every 
double-cone $K^\diamond$ with base $K\subset H^+$ (and so also for 
regions like $\W^+$ and $\tau^n(\W^+)$), but 
now we also have that $W\Omega=\tilde{\Omega}$. By the
{\it Bisognano-Wichmann property} it immediately follows that 
$WUW^*$ and $\tilde{U}$ coincide on both the boosts associated 
to $\W^+$ and to $\tau(\W^+)$.

Now a quick check shows that the subgroup generated by such boosts 
contains $\tau$ so actually we also have that 
$WU(\tau)W^*=\tilde{U}(\tau)$. Since every double-cone with base on 
$H$ can be shifted into $H^+$ by a repeated use of $\tau$, this further 
implies that $W\A(K^\diamond)W^* = \tilde{\A}(K^\diamond)$ for every 
double-cone $K^\diamond$ with base $K\subset H$ and hence also for 
infinite regions like wedges whose ``edges'' are included in $H$. Then in 
turn --- again by the 
{\it Bisognano-Wichmann property} --- we have that $WUW^*$ and 
$\tilde{U}$ coincides on every boost that is associated to {\it some}
wedge with edge in $H$. But elementary geometric arguments show that such 
boosts generate the entire connected Poincar\'e group so at this point we 
have that $WUW^*(g) = \tilde{U}(g)$ for every element $g$. Moreover, since 
every double-cone can be moved by a suitable Poincar\'e transformation so 
that its base will be on $H$, we now have that 
$W\A(K^\diamond)W^* = \tilde{\A}(K^\diamond)$ for {\it every} double-cone.
Thus $W$ establishes an equivalence between the two models, which is 
exactly what we wanted to prove.
\end{proof}

\section{The conformal case}
\label{sec:conformal}

The conformal chiral QFT, though originally defined on a lightline, can be 
naturally extended to the compactified lightline which is customely 
identified with the circle $S^1\equiv\{z\in\CC| \, |z|=1\}$. On the circle 
the theory becomes {\it M\"obius covariant}; that is, it will carry a 
symmetry action of the group of diffeomorphisms of $S^{1}$ of the form $z 
\mapsto \frac{az+b}{\overline{b}z+\overline{a}}$, which is called the 
{\bf M\"obius group} and is isomorphic to ${\rm PSL}(2,\RR)$.
The connection between the ``circle picture'' and the ``line picture'' 
(here ``line''$\equiv \RR$) is made
by puncturing the circle at $-1\in S^1$ and using 
a Cayley-transformation:
\begin{equation}
x = i \frac{1+z}{1-z}\in \RR  \;\;\Longleftrightarrow z = 
\frac{x-i}{x+i} \in S^1\setminus\{-1\}.
\end{equation}
Via the line picture one can view
{\it translations} and {\it dilations} as 
diffeomorphisms of $S^1$ and in this sense they are 
elements of the M\"obius group. 

A M\"obius covariant net of von Neumann algebras on $S^1$ is  
a map $\A$ which assigns to every nonempty, nondense open ``arc'' 
(or simply {\it interval}) $I\subset S^1$ a von Neumann algebra $\A(I)$ 
acting on a fixed Hilbert space $\H$, together
with a given strongly continuous representation $U$ of 
the M\"obius group satisfying certain properties.
Here we shall not dwell much neither on the defining properties of a 
M\"obius covariant net of von Neumann algebras on $S^1$, nor on their 
known consequences. We only assert that the defining properties are 
adopted versions of (1,2,3,4,5) whereas (the adopted versions of) property 
(6,7) --- that is, the {\it Reeh-Schlieder} and {\it Bisognano-Wichmann 
properties} --- are consequences. One also has {\it irreducibility, 
factoriality of local algebras} and moreover {\it 
additivity} even for an infinite set of intervals: $\vee_{I_\alpha} \A(I_\alpha) 
\supset \A(I)$ whenever $\cup_\alpha I_\alpha \supset I$ for {\it any} collection 
$\{I_\alpha\}$.
For details we refer to \cite{FrG,BGL,GL96,FrJ,GLW}.
Note however that one cannot derive {\it split property} (i.e.\! that 
$\A(K)\subset \A(I)$ is a split inclusion whenever $\overline{K}\subset I$), 
since by taking infinite tensorial products it is easy to construct 
non-split M\"obius covariant nets. 
Nevertheless, it is known to hold in the 
majority of ``interesting'' model. 

\begin{theorem}
Let $(\A,U)$ and $(\tilde{\A},\tilde{U})$ be M\"obius covariant nets of von 
Neumann algebras on $S^1$ with at least one of them being split. Then any of the 
following $4$ conditions:
\begin{itemize}
\item
$\exists$ a unitary 
$W$ s.t. $W\A(I)W^* = \tilde{\A}(I)$ and $WU(g)W^*= \tilde{U}(g)$ for 
all $I,g$,
\item
$\exists$ a unitary 
$V$ s.t. $V\A(I)V^* = \tilde{\A}(I)$ for all $I$,
\item
$\exists$ an (open, nonempty) $I$ and a unitary $V$ s.t.
$V\A(K)V^* = \tilde{\A}(K)$ for all $K\subset I$,
\item
$\exists$ a unitary $V$ s.t.\!
with $\RR$-picture notations
$V\A(j,k)V^* = \tilde{\A}(j,k)$ for 
all $j,k\in\NN$,
\end{itemize}
implies the remaining three.
 \end{theorem}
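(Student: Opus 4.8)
The plan is to prove the four statements equivalent by closing the cycle $(1)\Rightarrow(2)\Rightarrow(3)\Rightarrow(4)\Rightarrow(1)$. The first two links are immediate. For $(1)\Rightarrow(2)$ one simply discards the intertwining relation $WU(g)W^*=\tilde{U}(g)$ and keeps the net isomorphism. For $(2)\Rightarrow(3)$ one fixes any interval $I$ and restricts the global isomorphism of $(2)$ to the subintervals $K\subset I$. Hence all the work sits in the last two implications.

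For $(3)\Rightarrow(4)$ the idea is to transport the interval $I$ onto a half-line by \emph{M\"obius covariance}. Given $V$ with $V\A(K)V^*=\tilde{\A}(K)$ for $K\subset I$, choose $g$ in the M\"obius group with $g(I)=(0,\infty)$ (possible since the group acts transitively on intervals, and $(0,\infty)$ is a genuine interval in the $\RR$-picture). Setting $V_g:=\tilde{U}(g)VU(g)^*$ and using $U(g)\A(K)U(g)^*=\A(gK)$ together with the analogous relation for $\tilde{\A}$, a one-line computation gives $V_g\A(K')V_g^*=\tilde{\A}(K')$ for all $K'\subset(0,\infty)$. In particular this single unitary implements the isomorphism on every $\A(j,k)$ with $j,k\in\NN$, since all such intervals lie in $(0,\infty)$; this is exactly $(4)$.

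The substance of the theorem is $(4)\Rightarrow(1)$, which I would prove by transcribing the argument of the main theorem of Section~\ref{sec:main} into the chiral setting. After conjugating $\tilde{\A}$ by $V$ we may assume $V=\mathbbm{1}$, so $\A(j,k)=\tilde{\A}(j,k)$ for all integer intervals; \emph{additivity} of M\"obius nets then already forces $\A(0,\infty)=\tilde{\A}(0,\infty)$ and $\A(1,\infty)=\tilde{\A}(1,\infty)$. With $\tau\colon x\mapsto x+1$ the distinguished translation and $\gamma_n:={\rm Ad}\big(U(\tau^n)^*\tilde{U}(\tau^n)\big)$, one checks exactly as in Section~\ref{sec:main} that $\gamma_n$ fixes each $\A(j,k)$ and hence $\M:=\A(0,\infty)$, while the decreasing algebras $\A(n,\infty)$ have trivial intersection. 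The state-distance lemma gives $\|\omega\circ\gamma_n-\tilde{\omega}\|\to0$ on $\M$, so $\gamma_n$ can be implemented by unitaries $W_n$ with $W_n\Omega$ in the natural cone and $W_n\Omega\to\tilde{\Omega}$. Applying Corollary~\ref{main_cor} to the split inclusion $\N:=\A(1,2)\subset\M$---split precisely because one of the nets is assumed split, with $\Omega,\tilde{\Omega}$ standard---yields a subsequence $W_{s(n)}\to W$ converging strongly to a unitary.

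This limit $W$ preserves every integer-interval algebra and satisfies $W\Omega=\tilde{\Omega}$, hence also preserves the half-lines $\A(0,\infty),\A(1,\infty)$. By \emph{Bisognano--Wichmann} (automatic here) the modular groups of these half-lines are the dilations about $0$ and about $1$, so $W$ intertwines both of them, and these already generate the translations; covariance under the resulting affine group then upgrades the equality to $W\A(J)W^*=\tilde{\A}(J)$ for every \emph{bounded} interval $J$. Feeding a bounded interval such as $(0,1)$ back through Bisognano--Wichmann---its modular group being the M\"obius subgroup fixing both endpoints---supplies the remaining special-conformal direction, so $W$ intertwines the full M\"obius group and, by transitivity of the action on intervals, $W\A(I)W^*=\tilde{\A}(I)$ for \emph{all} $I$; this is $(1)$. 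The main obstacle is this transcription: the technical core is the compactness of Corollary~\ref{main_cor} (whence the split hypothesis), while the delicate conceptual point is that the \emph{coarse} integer-interval data suffice---which works only because additivity recovers the half-line algebras and a two-step Bisognano--Wichmann bootstrap converts their modular structure back into the entire M\"obius symmetry.
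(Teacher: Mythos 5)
Your proposal is correct and follows essentially the same route as the paper: each condition trivially implies the next, and the substance is $(4)\Rightarrow(1)$, obtained by transcribing the main theorem of Section~\ref{sec:main} with $\W^+$ replaced by $(0,\infty)$, $\tau$ the unit translation, and the split inclusion $\A(1,2)\subset\A(0,\infty)$ feeding Corollary~\ref{main_cor}. Your two-step Bisognano--Wichmann bootstrap (half-lines give the affine group, then $(0,1)$ supplies the rest) is just a slightly more explicit version of the paper's remark that the modular dilations of the integer intervals generate the whole M\"obius group.
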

\begin{proof}
It is clear that any of the conditions implies that if one of the nets is split 
then so is the other and that each condition implies the next one. 
All we have to show is that the last one implies the first one, which can be done by 
simply copying the argument of the proof of the main theorem of the previous 
section. 

Note that 
by (the infinite version of) {\it additivity} the last condition implies that 
for the unitary $V$ appearing in the condition we also have $V\A(k,\infty)V^* = 
\tilde{\A}(k,\infty)$ for every $k\in\NN$.
So we may replace the wedge $\W^+$ in our former proof by the half-line 
$(0,\infty)$. We have to be careful to 
use a translation $\tau$ by an integer length; say we let $\tau$ to be the unit 
translation $x\mapsto x+ 1$. For a split inclusion we can choose $\A(1,2)\subset 
\A(0,\infty)$. 
Then the argument of the mentioned proof shows that there exists a 
unitary $W$ such that $W\A(j,k)W^* = \tilde{\A}(j,k)$ for all $j,k\in\NN$
and moreover $W\Omega = \tilde{\Omega}$ where $\Omega$ and 
$\tilde{\Omega}$ are vacuum vectors for $U$ and $\tilde{U}$, respectively.

From here on the proof is 
actually even simpler than in the ``normal'' case. Indeed, whereas there the 
respective 
modular unitaries did {\it not} generate the Poincar\'e group and so we 
needed to consider further regions, here we do not need any further 
argument. It is easy 
to see that the ``dilations'' associated to intervals of the form $(j,k)$ 
$(j,k\in\NN)$ generate the entire M\"obius group. Moreover, the M\"obius group acts 
transitively on the set of (open, nondense, nonempty) intervals.
So we immediately have that  
$WU(g)W^*= \tilde{U}(g)$ for all $g$ and $W\A(I)W^* = \tilde{\A}(I)$
for all $I$.
\end{proof}
Let us talk about the possible implications of this result regarding half-sided 
modular inclusions. A net $\A$ on the circle is {\bf strongly additive} iff
$\A(I_1)\vee \A(I_2)= \A(I)$ whenever $I,I_1$ and $I_2$ are 
intervals with the last two being obtained from $I$ by the removal of a point.
As was explained in the introduction, 
there is a one-to-one correspondence between {\it strongly additive} 
M\"obius covariant nets 
and standard half-sided modular inclusions of factors.

For any inclusion of von Neumann algebras $\N\subset \M$ 
with a common cyclic vector $\Omega$ consider the 
tunnel introduced by R.\! Longo:
\begin{equation}
\label{tower}
\N_0\supset \N_1\supset \N_2 \supset \N_3 \supset \ldots 
\end{equation}
where $\N_0=\M, \, \N_1=\N$ and $\N_{k+1} = J_k \N_{k-1}' J_k$ 
$(k=1,2,\ldots)$ and $J_k$ is the modular conjugation associated to 
$(\N_k,\Omega)$. 

It is easy to see that the tunnel is well-defined (i.e.\! that $\Omega$ 
remains cyclic and separating at each step of the induction and hence the 
modular conjugation can be indeed considered). But how does it 
depend on the choice of the common cyclic vector $\Omega$? In some sense 
not much. The following statement is included for reasons of 
self-containment; it is well-known to experts of the field.
\begin{lemma}
Let both $\Omega$ and $\tilde{\Omega}$ be common cyclic vectors for the 
inclusion of von Neumann algebras $\N\subset\M$, and denote by 
$\N_0\supset \N_1 \supset \N_2 \supset \ldots$
and
$\tilde{\N}_0\supset \tilde{\N}_1 \supset \tilde{\N}_2 \supset \ldots$
the respective tunnels
defined after equation (\ref{tower}). Then for each $n\in\NN$ 
there exists a unitary operator $V_n$ such that 
$$
V_n \N_k V_n^* = \tilde{\N}_k \;\;\; \textrm{ for all } 
k\in \{0,1,\ldots, n\}.
$$
That is, up to any finite level, the two tunnels are unitarily equivalent.
\end{lemma}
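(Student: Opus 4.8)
The plan is to show that the tunnel depends on the common cyclic vector only through the modular conjugations $J_{\N_k,\Omega}$, and that \emph{changing} the vector perturbs each step of the tunnel by an inner automorphism implemented \emph{inside} the relevant tunnel algebra; a finite product of such inner unitaries will then furnish $V_n$. First I would recast one step through Longo's canonical endomorphism. Writing $J_\M:=J_{\M,\Omega}$, $J_\N:=J_{\N,\Omega}$ and using $\M'=J_\M\M J_\M$, one has $\N_2=J_\N\M'J_\N=\gamma_\Omega(\M)$ with $\gamma_\Omega:={\rm Ad}(J_\N J_\M)$; more generally each passage $\N_{k-1}\rightsquigarrow\N_{k+1}$ is the level-two construction for the inclusion $\N_k\subset\N_{k-1}$, i.e. $\N_{k+1}=J_{\N_k,\Omega}\N_{k-1}'J_{\N_k,\Omega}$.

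The key lemma to establish is that for a fixed inclusion $\N\subset\M$ the canonical endomorphism changes \emph{innerly} with the vector: there is a unitary $u\in\N$ with $\gamma_{\tilde\Omega}={\rm Ad}(u)\circ\gamma_\Omega$, so that $\tilde\N_2=u\N_2u^*$. I would prove this with natural cones. Choosing representatives $\xi\in\P^\natural_{(\M,\Omega)}$ and $\eta\in\P^\natural_{(\N,\Omega)}$ of the states that $\tilde\Omega$ induces on $\M$ and on $\N$, one has $\tilde\Omega=a'\xi=b'\eta$ for unitaries $a'\in\M'$, $b'\in\N'$, whence $J_{\M,\tilde\Omega}=a'J_\M a'^*$ and $J_{\N,\tilde\Omega}=b'J_\N b'^*$. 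Substituting into $\gamma_{\tilde\Omega}={\rm Ad}(J_{\N,\tilde\Omega}J_{\M,\tilde\Omega})$ and discarding the actions that are trivial on $\M$ (because $a'$ commutes with $\M$ and $b'$ with $\N$), a short computation yields $\gamma_{\tilde\Omega}={\rm Ad}(u)\circ\gamma_\Omega$ with $u:=J_\N(b'^*a')J_\N$. Crucially $\M'\subset\N'$, so $b'^*a'\in\N'$ and therefore $u\in\N$: the corrector sits inside the smaller algebra, not merely in a commutant.

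With this in hand the theorem is a finite induction, built from the outside in. Levels $0$ and $1$ require nothing, since $\N_0=\tilde\N_0=\M$ and $\N_1=\tilde\N_1=\N$. Suppose $V_n$ matches $\N_k$ with $\tilde\N_k$ for all $k\le n$, with $V_n$ a finite product of unitaries lying in tunnel algebras. After conjugating by $V_n$ one is in exactly the level-two situation for the inclusion $\tilde\N_n\subset\tilde\N_{n-1}$, now carrying the two vectors $V_n\Omega$ and $\tilde\Omega$; applying the key lemma there produces a unitary $c_{n+1}\in\tilde\N_n$ with $c_{n+1}\bigl(V_n\N_{n+1}V_n^*\bigr)c_{n+1}^*=\tilde\N_{n+1}$. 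Since $c_{n+1}\in\tilde\N_n\subset\tilde\N_j$ for every $j\le n$, conjugation by it is inner on each already-matched algebra and hence preserves it, so $V_{n+1}:=c_{n+1}V_n$ matches levels $0,\dots,n+1$.

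The main obstacle is precisely this key lemma --- not the bare fact that the two level-two algebras are unitarily conjugate (the product of the two modular conjugations conjugates one onto the other at once), but that the implementing unitary can be taken \emph{inside} $\N$ rather than in a commutant, since a commutant unitary would destroy the larger, already-matched algebras. This is exactly what the natural-cone computation delivers, through the inclusion $\M'\subset\N'$. I also expect to spend some care checking the endomorphism description of the deeper tunnel steps, so that the lemma genuinely applies to the inclusion $\tilde\N_n\subset\tilde\N_{n-1}$ at every stage. Finally it is worth stressing that no convergence is claimed: the finiteness of the product $c_{n+1}\cdots c_2$ is precisely the reason one obtains only finite-level equivalence rather than a single vacuum-preserving unitary.
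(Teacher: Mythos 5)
Your proof is correct, and the inductive skeleton coincides with the paper's: at each stage one corrects $V_n$ by a unitary lying in the deepest already-matched algebra $\tilde{\N}_n$, so that conjugation is inner on (hence preserves) every previously matched level while the next level falls into place. Where you genuinely diverge is in how that corrector is produced. The paper's inductive step is shorter: it picks $U_n\in\tilde{\N}_n$ so that $U_nV_n\Omega$ lies in the natural cone of $(\tilde{\Omega},\tilde{\N}_n)$; since a vector in that cone has the same modular conjugation as $\tilde{\Omega}$, one gets $V_{n+1}J_{\N_n,\Omega}V_{n+1}^*=J_{\tilde{\N}_n,\tilde{\Omega}}$ exactly, and the defining formula $\N_{n+1}=J_n\N_{n-1}'J_n$ then transforms term by term onto $\tilde{\N}_{n+1}$. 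You instead isolate a key lemma --- that changing the cyclic vector perturbs the canonical endomorphism ${\rm Ad}(J_{\N}J_{\M})$ by an inner automorphism of the \emph{smaller} algebra, with explicit corrector $u=J_{\N}(b'^*a')J_{\N}\in\N$ --- and the computation checks out: $a'\in\M'$ kills itself against $m$ at the innermost step, and the outer ${\rm Ad}(b')$ is harmless because the intermediate expression already lies in $\N$ while $b'\in\N'$. Your route buys an explicit cocycle formula and a reusable statement of independent interest (inner equivalence of $\gamma_{\Omega}$ and $\gamma_{\tilde{\Omega}}$); the paper's route buys brevity, replacing your two cone representatives and the algebraic computation by a single cone representative that transports the modular conjugation on the nose. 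Both arguments lean on the same standing fact, asserted but not proved in the paper, that $\Omega$ and $\tilde{\Omega}$ remain cyclic and separating for every $\N_k$ and $\tilde{\N}_k$; this is what guarantees the existence of your unitaries $a'$, $b'$ and of the paper's $U_n$.
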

\begin{proof}
We set $V_1=\mathbbm 1$ and define $V_n$ inductively.
Now for $n=1$ the condition is satisfied since by assumption
$\N_0=\tilde{\N}_0=\M$ and $\N_1=\tilde{\N}_1=\N$. So assume $V_k$
is already defined in a way satisfying the requirement made in the 
statement. Then $V_k\Omega$ is cyclic and separating for $(V_k\N_k 
V_k^*)=\tilde{\N}_k$ so there is a unitary $U_k\in 
\tilde{\N}_k$ such that $U_k V_k\Omega$ is in the
natural cone of $(\tilde{\Omega},\tilde{\N}_k)$. Set $V_{k+1}:=U_k V_k$; 
it is then evident that $V_{k+1}\N_j V_{k+1}^* = \tilde{\N}_j$ for all 
$j=0,1,\ldots ,k$. Moreover, as 
$V_{k+1}\Omega = U_k V_k \Omega$ is in the natural cone of 
$(\tilde{\Omega},\tilde{\N}_k)$ and $V_{k+1}\N_kV_{k+1}^* = \tilde{\N}_k$,
we have that the adjoint action of $V_k$ takes 
the modular conjugation $J_k$ associated to $(\Omega,\N_k)$ 
into the modular conjugation $\tilde{J}_k$
associated to $(\tilde{\Omega},\tilde{\N}_k)$.
Thus
\begin{equation}
V_{k+1}\N_{k+1}V_{k+1}^*= V_{k+1} J_k \N_{k-1}' J_k V_{k+1}^* = 
\tilde{J}_k (V_{k+1}\N_{k-1}V_{k+1}^*)' \tilde{J}_k = \tilde{J}_k \tilde{\N}_{k-1}' 
\tilde{J}_k 
= \tilde{\N}_{k+1}
\end{equation}
and hence the statement is proved by induction.
\end{proof}

Let $(\A,U)$ be a M\"obius covariant net with vacuum vector $\Omega$ and 
denote the modular objects associated to $(\Omega, \A(k,\infty))$ by $J_k$ and 
$\Delta_k$. Using the {\it Bisognano-Wichmann property} and 
the main theorem \cite[Thm.\! 2.1]{AZs} of half-sided modular inclusions, the
product $J_k J_{k-1}$ can be expressed with the modular unitaries which in turn 
can be expressed by $U$ resulting in $J_kJ_{k-1} = U(\tau)^2$ where $\tau$ 
is the unit-translation defined in the $\RR$-picture by the map $x\mapsto 
x+1$. Hence
\begin{equation}
J_k \A(k-1, \infty)' J_k = 
J_k J_{k-1} 
\A(k-1,\infty)
J_{k-1} J_{k} = U(\tau)^2
\A(k-1,\infty) U(\tau)^{-2} = 
\A(k+1,\infty)
\end{equation}
and so the tunnel (\ref{tower}) associated to $(\Omega,\A(0,\infty)\subset 
\A(1,\infty))$ is nothing else than the sequence of inclusions
\begin{equation}
\A(0,\infty)\subset
\A(1,\infty)\subset
\A(2,\infty)\subset \ldots
\end{equation}
Note that in case we have {\it strong additivity}, by taking relative commutants 
this sequence determines all algebras of the form $\A(j,k)$ with $j,k\in\NN$. 
{\it Vice versa}, if we know $\A(j,k)$ for all $j,k\in\NN$ then by (the 
infinite version 
of) {\it additivity} we can compute all algebras of the form 
$\A(k,\infty)$ with 
$k\in\NN$. So by what was explained we can draw the following conclusion.
\begin{corollary}
Suppose $(\Omega,\N\subset \M)$ and $(\tilde{\Omega},\tilde{\N}\subset 
\tilde{\M})$ are two standard half-sided modular inclusions of factors and 
denote the two corresponding strongly additive M\"obius covariant nets 
by $(\A,U)$ and $(\tilde{\A},\tilde{U})$, respecively. Then the conditions:
\begin{itemize}
\item $\exists$ a unitary $V$ s.t. $V\M V^* = \tilde{\M},\, 
V\N V^* = \tilde{\N}$,
\item $\forall n\in\NN: \exists $ unitary $V_n$ s.t.
$ V_n\A(j,k)V_n^* = \tilde{\A}(j,k)$ for all $j,k\in\{0,1,\ldots,n\}$,
\end{itemize}
are equivalent.
\end{corollary}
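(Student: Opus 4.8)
The plan is to establish the two implications separately, leaning on the identification proved just above that the Longo tunnel of $(\Omega,\N\subset\M)$ is the decreasing chain $\A(0,\infty)\supset\A(1,\infty)\supset\ldots$, together with the fact that for a strongly additive net the bounded algebras and the half-line algebras determine one another: by Haag duality (a consequence of strong additivity) and the intersection property one has $\A(j,k)=\A(j,\infty)\cap\A(k,\infty)'$ for $j<k$, while conversely $\A(k,\infty)=\bigvee_{m>k}\A(k,m)$ by (infinite) additivity.

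For the implication from the first condition to the second, suppose $V\M V^*=\tilde\M$ and $V\N V^*=\tilde\N$. Since $\mathrm{Ad}(V)$ is compatible with the entire tunnel construction, carrying the modular conjugations and relative commutants of $(\Omega,\N\subset\M)$ to those of $(V\Omega,\tilde\N\subset\tilde\M)$, the unitary $V$ maps $k\mapsto\A(k,\infty)$ onto the tunnel of the inclusion $\tilde\N\subset\tilde\M$ computed with respect to the vector $V\Omega$. Now $V\Omega$ and $\tilde\Omega$ are two common cyclic and separating vectors for the single inclusion $\tilde\N\subset\tilde\M$, so the preceding Lemma furnishes, for each $n$, a unitary $W_n$ identifying these two tunnels up to level $n$; putting $V_n:=W_nV$ gives $V_n\A(k,\infty)V_n^*=\tilde\A(k,\infty)$ for all $k\le n$. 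Since $\mathrm{Ad}(V_n)$ preserves commutants and intersections, the relative-commutant formula above upgrades this at once to $V_n\A(j,k)V_n^*=\tilde\A(j,k)$ for all $j,k\in\{0,\ldots,n\}$. This direction is routine once the tunnel Lemma is in hand.

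For the converse I would reduce, via $\M=\bigvee_m\A(0,m)$ and $\N=\bigvee_m\A(1,m)$, to producing a \emph{single} unitary $V$ with $V\A(0,m)V^*=\tilde\A(0,m)$ and $V\A(1,m)V^*=\tilde\A(1,m)$ for every $m$. Indeed, if a subsequence $V_{s(n)}$ converges strongly to a unitary $V$, then since $V_n\A(0,m)V_n^*=\tilde\A(0,m)$ for all $n\ge m$ and $\tilde\A(0,m)$ is strongly closed, the limit satisfies $V\A(0,m)V^*=\tilde\A(0,m)$ for each fixed $m$ (using $V_{s(n)}^*\to V^*$ strongly, which is where unitarity of $V$ enters), and hence $V\M V^*=\tilde\M$ and $V\N V^*=\tilde\N$. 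Before taking the limit I would exploit the freedom $V_n\mapsto\tilde u_nV_nu_n$ with $u_n\in\A(0,n)'$ and $\tilde u_n\in\tilde\A(0,n)'$, which leaves every matching at level $\le n$ undisturbed, to normalize the vectors $V_n\Omega$.

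The step I expect to be the genuine obstacle is exactly this convergence. The per-level unitaries handed to us by the second condition carry no coherence between consecutive levels: the two isomorphisms $\mathrm{Ad}(V_{n+1})|_{\A(0,n)}$ and $\mathrm{Ad}(V_n)$ of $\A(0,n)$ onto $\tilde\A(0,n)$ may differ by an arbitrary, possibly outer, automorphism of the type~III factor $\tilde\A(0,n)$, so no naive diagonal or Elliott-style intertwining is available, and a weak limit of unitaries is in general merely a contraction. This is the same phenomenon that forced the use of compactness in the main theorem; I would try to reproduce the argument of Corollary~\ref{main_cor} here, checking that after the natural-cone normalization the relevant states converge in norm (by the lemma on decreasing sequences of algebras). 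The delicate point is whether this compactness can be secured without invoking split, in line with the author's expectation that split ought not to be essential.
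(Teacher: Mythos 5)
Your first implication is handled exactly as the paper intends: conjugating the tunnel of $(\Omega,\N\subset\M)$ by $V$, applying the tunnel Lemma to the two standard vectors $V\Omega$ and $\tilde{\Omega}$ of the single inclusion $\tilde{\N}\subset\tilde{\M}$, and descending to the bounded intervals via the relative commutant formula supplied by strong additivity. That half is correct and complete.

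The converse is where your proposal has a genuine gap, and the gap comes from misidentifying the nature of this direction. You treat it as a limiting problem --- normalize the $V_n$ and extract a strongly convergent subsequence --- and you correctly observe that the only compactness mechanism available in the paper is the split property, which the corollary does not assume; so the plan you sketch cannot be closed, as you yourself concede. But no limit is needed, because a \emph{single} level already suffices. The missing ingredient is Haag duality on $S^1$ (automatic for M\"obius covariant nets, being a consequence of the Bisognano--Wichmann property) combined with covariance: one has $\M=\A(0,\infty)=\A(-\infty,0)'$ and $\N=\A(1,\infty)=\A(-\infty,1)'$, and the M\"obius transformation $g$ fixed by $g(\infty)=0$, $g(0)=1$, $g(1)=2$ carries the arcs $(-\infty,0)$ and $(-\infty,1)$ onto $(0,1)$ and $(0,2)$. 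Hence $\M=U(g)^*\A(0,1)'U(g)$ and $\N=U(g)^*\A(0,2)'U(g)$, and likewise for the tilded net. Since ${\rm Ad}(V_2)$ matches $\A(0,1)$ with $\tilde{\A}(0,1)$ and $\A(0,2)$ with $\tilde{\A}(0,2)$, and conjugation commutes with taking commutants, the single unitary $V:=\tilde{U}(g)^*V_2U(g)$ already satisfies $V\M V^*=\tilde{\M}$ and $V\N V^*=\tilde{\N}$. This is the precise sense in which the paper's remark that the bounded-interval algebras determine the half-line algebras should be read: the converse direction is soft, uses only one $V_n$ with $n\geq 2$, and requires neither the split property nor any convergence of the sequence $n\mapsto V_n$.
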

The relevance of this statement 
in light of the conformal version of our algebraic Haag's theorem
has been already discussed in the introduction.
\vspace{0.8cm}

\noindent
{\bf \large Acknowledgment.} The author would like to thank 
Roberto Longo, Sebastiano Carpi and Yoh Tanimoto for useful 
discussions.

\end{document}